\crefname{algocf}{alg.}{algs.}
\Crefname{algocf}{Algorithm}{Algorithms}
\newcommand{\setdisj}{\textsf{Set Disjointness }}
\def \ba     {{\bf a}}
\def \bx     {{\bf x}}
\def \by     {{\bf y}}
\providecommand{\mS}[0]{\mathcal{S}}
\providecommand{\domain}[0]{\mathbf{dom}}
\providecommand{\edges}[0]{\mathcal{E}}
\providecommand{\nodes}[0]{\mathcal{V}}
\providecommand{\vars}[1]{\textsf{vars}(#1)}
\providecommand{\bx}[0]{\mathbf{x}}
\providecommand{\by}[0]{\mathbf{y}}
\providecommand{\bu}[0]{\mathbf{u}}
\newcommand{\introparagraph}[1]{\noindent {\bf \em #1.}}
\providecommand{\htree}[0]{\mathcal{T}}
\providecommand{\bag}[0]{\mathcal{B}}
\providecommand{\fhw}[1]{\mathsf{fhw}(#1)}
\providecommand{\bound}[0]{\mathsf{b}}
\providecommand{\free}[0]{\mathsf{f}}
\providecommand{\slack}[0]{\alpha}
\providecommand{\mB}[0]{\mathcal{B}}
\providecommand{\prq}[2]{{#1}\mid{#2}}
\providecommand{\tree}[0]{\mathtt{T}}
\providecommand{\manc}[0]{\mathsf{anc}}
\providecommand{\mL}[0]{\mathcal{L}}
\begin{document}
\title{General Space-Time Tradeoffs via Relational Queries}

\author{
    Shaleen Deep\inst{1}
      \and Xiao Hu\inst{2}
      \and Paraschos Koutris\inst{3}
}
\authorrunning{Deep et al.}
%
\institute{Microsoft GSL, USA \and University of Waterloo, Canada \and University of Wisconsin-Madison, USA \\
\email{shaleen.deep@microsoft.com},
\email{xiaohu@uwaterloo.ca},
\email{paris@cs.wisc.edu}}
\maketitle              
\begin{abstract}
    In this paper, we investigate space-time tradeoffs for answering Boolean conjunctive queries. The goal is to create a data structure in an initial preprocessing phase and use it for answering (multiple) queries. Previous work has developed data structures that trade off space usage for answering time and has proved conditional space lower bounds for queries of practical interest such as the path and triangle query. However, most of these results cater to only those queries, lack a comprehensive framework, and are not generalizable. The isolated treatment of these queries also fails to utilize the connections with extensive research on related problems within the database community. The key insight in this work is to exploit the formalism of relational algebra by casting the problems as answering join queries over a relational database. Using the notion of boolean {\em adorned queries} and {\em access patterns}, we propose a unified framework that captures several widely studied algorithmic problems. Our main contribution is three-fold. First, we present an algorithm that recovers existing space-time tradeoffs for several problems. The algorithm is based on an application of the {\em join size bound} to capture the space usage of our data structure. We combine our data structure with {\em query decomposition} techniques to further improve the tradeoffs and show that it is readily extensible to queries with negation. Second, we falsify two proposed conjectures in the existing literature
    related to the space-time lower bound for path queries and triangle detection for which we show unexpectedly better algorithms. This result opens a new avenue for improving several algorithmic results that have so far been assumed to be (conditionally) optimal. Finally, we prove new conditional space-time lower bounds for star and path queries. 
\end{abstract}
\section{Introduction} \label{sec:intro}

Recent work has made remarkable progress in developing data structures and algorithms for answering set intersection problems~\cite{goldstein2017conditional}, reachability oracles and directed reachability~\cite{agarwal2011approximate,agarwal2014space,cohen2010hardness}, histogram indexing~\cite{chan2015clustered,kociumaka2013efficient}, and problems related to document retrieval~\cite{afshani2016data,larsen2015hardness}. This class of problems splits an algorithmic task into two phases: the {\em preprocessing phase}, which computes a space-efficient data structure, and the {\em answering phase}, which uses the data structure to answer the requests to minimize the answering time. A fundamental algorithmic question related to these problems is the tradeoff between the space $S$ necessary for data structures and the answering time $T$ for requests. 

For example, consider the $2$-\setdisj problem: given a universe of elements $U$ and a collection of $m$ sets $C_1, \dots, C_m \subseteq U$, we want to create a data structure such that for any pair of integers $1 \leq i,j \leq m$, we can efficiently decide whether $C_i \cap C_j$ is empty or not. Previous work~\cite{cohen2010hardness,goldstein2017conditional} has shown that the space-time tradeoff for $2$-\setdisj is captured by the equation $S \cdot T^2 = N^2$, where $N$ is the total size of all sets. The data structure obtained is conjectured to be optimal~\cite{goldstein2017conditional}, and its optimality was used to develop conditional lower bounds for other problems, such as approximate distance oracles~\cite{agarwal2011approximate,agarwal2014space}. Similar tradeoffs have been independently established for other data structure problems as well. In the $k$-\textsf{Reachability} problem~\cite{goldstein2017conditional,Cohen2010} we are given as an input a directed  graph $G = (V,E)$, an arbitrary pair of  vertices $u, v$, and the goal is to decide whether there exists a path of length $k$ between $u$ and $v$. In the \textsf{edge triangle detection} problem~\cite{goldstein2017conditional}, we are given an input undirected  graph $G = (V,E)$, and the goal is to develop a data structure that takes space $S$ and can answer in time $T$ whether a given edge $e \in E$ participates in a triangle or not. Each of these problems has been studied in isolation and, as a result, the algorithmic solutions are not generalizable. 

In this paper, we cast many of the above problems into answering {\em Conjunctive Queries (CQs)} over a relational database. CQs are a powerful class of relational queries with widespread applications in data analytics and graph exploration~\cite{graphgen2015,graphgen2017,deep2018compressed}. For example, by using the relation $R(x,y)$ to encode that element $x$ belongs to set $y$, $2$-\setdisj can be captured by  the following CQ: $\varphi (y_1, y_2) = R(x,y_1) \land R(x,y_2)$. The insight of casting data structure problems into CQs over a database allows for a unified treatment for developing algorithms within the same framework. In particular, we can leverage the techniques developed by the data management community through a long line of research on efficient join evaluation~\cite{yannakakis1981algorithms,skewstrikesback,ngo2012worst}, including worst-case optimal join algorithms~\cite{ngo2012worst} and tree decompositions~\cite{gottlob2014treewidth,robertson1986graph}. Building upon these techniques, we achieve the following:
\begin{itemize}
\item We obtain in a simple way general space-time tradeoffs for any Boolean CQ (a Boolean CQ is one that outputs only true or false). As a consequence, we recover state-of-the-art tradeoffs for several existing problems (e.g., $2$-\setdisj as well as its generalization $k$-\setdisj and $k$-\textsf{Reachability}) as special cases of the general tradeoff. We can even obtain improved tradeoffs for some specific problems, such as edge triangles detection, thus falsifying existing conjectures. This also gives us a way to construct data structures for any new problem that can be cast as a Boolean CQ (e.g., finding any subgraph pattern in a graph).
\item Space-time tradeoffs for enumerating (non-Boolean) query results under static and dynamic settings have been a subject of previous work~\cite{abo2020decision,greco2013structural,deep2018compressed,olteanu2016factorized,kara19,kara2019counting}.  The space-time tradeoffs from~\cite{deep2018compressed} can be applied to the setting of this paper by stopping the enumeration after the first result is observed. We improve upon this result by $(i)$ showing a much simpler data structure construction and proofs, and $(ii)$ shaving off a polylogarithmic factor from the tradeoff.

\end{itemize}

\noindent We next summarize our three main technical contributions.

\begin{enumerate}

\item We propose a unified framework that captures several widely-studied data structure problems. More specifically, we use the formalism of CQs and the notion of {\em Boolean adorned queries}, where the values of some variables in the query are fixed by the user (denoted as an {\em access pattern}), and aim to evaluate the Boolean query. We then show how this framework captures the $2$-\setdisj and $k$-\textsf{Reachability} problems. Our first main result (\autoref{thm:main1}) is an algorithm that builds a data structure to answer any Boolean CQ under a specific access pattern. We show how to recover existing and new tradeoffs using this general framework. The first main result may sometimes lead to suboptimal tradeoffs since it does not take into account the structural properties of the query. Our second main result (\autoref{thm:main2}) combines tree decompositions of the query structure with access patterns to improve space efficiency. We then show how this algorithm can handle Boolean CQs with negation.

\item We explicitly improve the best-known space-time tradeoff for the $k$-\textsf{Reachability} problem for  $k \geq 4$. For any $k \geq 2$, the tradeoff of $S \cdot T^{2/(k-1)} = O(|E|^2)$ was conjectured to be optimal by~\cite{goldstein2017conditional}, where $|E|$ is the number of edges in the graph, and was used to conditionally prove other lower bounds on space-time tradeoffs. We show that for a regime of answer time $T$, it can be improved to $S \cdot T^{2/(k-2)} = O(|E|^2)$, thus breaking the conjecture. 
	To the best of our knowledge, this is the first non-trivial improvement for the $k$-\textsf{Reachability} problem. We also refute a lower bound conjecture for the edge triangles detection problem established by~\cite{goldstein2017conditional} that appeared at WADS'17.

\item Our third main contribution applies our framework to CQs with negation. This allows us to construct space-time tradeoffs for tasks such as detecting open triangles in a  graph. We also show a reduction between lower bounds for the problem of $k$-\setdisj for $k \ge 2$, which generalizes the $2$-\setdisj to computing the intersection between $k$ given sets.

	

 \end{enumerate}	


\section{Notation and Preliminaries} \label{sec:framework}

\renewcommand{\mS}{\mathbf{s}}

\introparagraph{Data Model} A {\em schema} is defined as a collection of relation names, where each relation name $R$ is associated with an arity $n$. Assuming a (countably infinite) domain $\domain$, a tuple $t$ of relation $R$ is an element of $\domain^n$. An instance of relation $R$ with arity $n$ is a finite set of tuples of $R$; the size of the instance will be denoted as $|R|$.  An input database $D$ is a set of relation instances over the schema. The size of the database $|D|$ is the sum of sizes of all its instances.


\smallskip

\introparagraph{Conjunctive Queries} A {\em Conjunctive Query} (CQ) is an expression of the form $\varphi(\by) = R_1(\bx_1) \land R_2(\bx_2) \land \ldots \land R_n(\bx_n).$
The expressions $\varphi(\by), R_1(\bx_1), R_2(\bx_2), \ldots, R_n(\bx_n)$ are called {\em atoms}. The atom $\varphi(\by)$ is the {\em head} of the query, while the atoms $R_i(\bx_i)$ form the {\em body}. Here, $\by, \bx_1, \dots, \bx_n$ are vectors where each position is a variable (typically denoted as $x,y,z,\dots$) or a constant from $\domain$ (typically denoted  $a,b,c,\dots$). Each $\bx_i$ must match the arity of the relation $R_i$, and the variables in $\by$ must occur in the body of the query. We use $\vars{\varphi}$ to denote the set of all variables occurring in $\varphi$, and $\vars{R_i}$ to denote the set of variables in atom $R_i(\bx_i)$.
A CQ is {\em full} if every variable in the body appears also in the head, and {\em Boolean} if the head contains no variables. Given variables $x_1, \dots, x_k$ from $\vars{\varphi}$ and constants $a_1, \dots, a_k$ from $\domain$, we define $\varphi[a_1 /x_1, \dots, a_k / x_k]$ to be the CQ where every occurrence of a variable $x_i$, $i=1, \dots, k$, is replaced by the constant $a_i$. Given an input database $D$ and a CQ $\varphi$, we define the query result $\varphi(D)$ as follows. A {\em valuation} $v$ is a mapping from $\domain \cup \vars{\varphi}$ to $\domain$ such that $v(a)=a$ whenever $a$ is a constant. Then, $\varphi(D)$ is the set of all tuples $t$ such that there exists a valuation $v$ for which $t = v(\by)$ and for every atom $R_i(\bx_i)$, we have $R_i(v(\bx_i)) \in D$.\footnote{Here we extend the valuation to mean $v((a_1, \dots, a_n)) = (v(a_1), \dots, v(a_n))$.}


\begin{example}
Suppose that we have a directed graph $G$ that is represented through a binary relation $R(x,y)$: this means that there exists an edge from node $x$ to node $y$. We can compute the pairs of nodes that are connected by a directed path of length $k$ using the following CQ, which we call a {\em path query}: $
P_{k}(x_1, x_{k+1}) = R(x_1, x_2) \land R(x_2, x_3) \land \dots \land R(x_k, x_{k+1}).$
\end{example}

\smallskip

\introparagraph{Output Size Bounds}
Let $\varphi(\by) = R_1(\bx_1) \land R_2(\bx_2) \land \ldots \land R_n(\bx_n)$ be a CQ.
A weight assignment $\bu = (u_i)_{i=1, \dots, n}$ 
is called a {\em fractional edge cover} of $S \subseteq \vars{\varphi}$ if 
$(i)$ for every atom $R_i$, $u_i \geq 0$  and $(ii)$ for every
$x \in S, \sum_{i:x \in \vars{R_i}} u_i \geq 1$. 
The {\em fractional edge cover number} of $S$, denoted by
$\rho^*(S)$ is the minimum of
$\sum_{i=1}^n u_i$ over all fractional edge covers of $S$. Whenever $S = \vars{\varphi}$, we call this a fractional edge cover of $\varphi$ and simply use $\rho^*$. In a celebrated result, Atserias, Grohe and Marx~\cite{AGM} proved that for every fractional edge cover $\bu$ of $\varphi$,  the size of the output is bounded by the {\em AGM inequality}:
$
|\varphi(D)| \leq \prod_{i=1}^n |R_i|^{u_i}
$.
The above bound is constructive~\cite{skewstrikesback,ngo2012worst}: there exists an algorithm that computes the result $\varphi(D)$ in  $O(\prod_{i} |R_i|^{u_i})$ time for every fractional edge cover $\bu$.

\smallskip

\introparagraph{Tree Decompositions}
Let $\varphi(\by) = R_1(\bx_1) \land R_2(\bx_2) \land \ldots \land R_n(\bx_n)$ be a CQ.
A {\em tree decomposition} of $\varphi$ is  a tuple 
$(\htree, (\bag_t)_{t \in V(\htree)})$ where $\htree$ is a tree, and 
every $\bag_t$ is a subset of $\vars{\varphi}$, called the {\em bag} of $t$, such that 
	\begin{itemize}
		\item  For every atom $R_i$,  the set $\vars{R_i}$ is contained in some bag; and
		\item  For each variable $x \in \vars{\varphi}$, the set of nodes $\{t \mid x \in \bag_t\}$ form a connected subtree of $\htree$.
	\end{itemize}

The {\em fractional hypertree width} of a decomposition is 
defined as $\max_{t \in V(\htree)} \rho^*(\bag_t)$, where
$\rho^*(\bag_t)$ is the minimum fractional edge cover of the vertices in $\bag_t$.
The  fractional hypertree width of a query $\varphi$, denoted $\fhw{\varphi}$, is the minimum 
fractional hypertree width among all tree decompositions.
We say that a query is {\em acyclic} if $\fhw{\varphi}=1$.

\smallskip

\introparagraph{Computational Model}
To measure the running time of our algorithms, we will use the uniform-cost RAM  model~\cite{hopcroft1975design}, where data values and pointers to databases are of constant size. Throughout the paper, all complexity results are with respect to data complexity, where the query is assumed fixed. 


\section{Framework} \label{sec:framework2}


\subsection{Adorned Queries}

In order to model different access patterns, we will use the concept of {\em adorned queries} introduced by~\cite{ullman1986approach}. Let $\varphi(x_1, \dots, x_k)$ be the head of a CQ $\varphi$. In an adorned query, each variable in the head is associated with a binding type, which can be either {\em bound} ($\bound$) or {\em free} ($\free$). We denote this as $\varphi^\eta$, where $\eta \in \{\bound,\free\}^k$ is called the {\em access pattern}. The access pattern tells us for which variables the user must provide a value as input. Concretely, let $x_1,x_2, \dots, x_\ell$ be the bound variables. An {\em access request} is sequence of constants $a_1, \dots, a_\ell$, and it asks to return the result of the query $\varphi^\eta[a_1 /x_1, \dots, a_\ell / x_\ell]$ on the input database. We next demonstrate how to capture several data structure problems  in this way.

\begin{example}[Set Disjointness and Set Intersection]
In the set disjointness problem, we are given $m$ sets $S_1, \dots, S_m$ drawn from the same universe $U$. Let $N = \sum_{i=1}^m |S_i|$ be the total size of input sets. Each access request is a pair of indexes $(i,j), 1 \leq i,j, \leq m$, for which we need to decide whether $S_i \cap S_j$ is empty or not.
To cast this problem as an adorned query, we encode the family of sets as a binary relation $R(x,y)$, such that element $x$ belongs to set $y$. Note that the relation will have size $N$. Then, the set disjointness problem corresponds to:
$\varphi^{\bound\bound}(y,z) =  R(x,y) \land R(x,z).$
An access request in this case specifies two sets $y=S_i, z=S_j$, and issues the (Boolean) query $\varphi(S_i, S_j) =  R(x,S_i) \land R(x, S_j)$. In the related set intersection problem, given a pair of indexes $(i,j)$ for $1 \leq i,j, \leq m$, we instead want to enumerate the elements in the intersection $S_i \cap S_j$, which can be captured by the following adorned query:
$ \varphi^{\bound\bound\free}(y,z,x) = R(x,y) \land R(x,z)$.
\end{example}

\begin{example}[$k$-Set Disjointness]
The $k$-set disjointness problem is a generalization of 2-set disjointness problem, where each request asks whether the intersection between $k$ sets is empty or not. Again, we can cast this problem into the following adorned query:
$ \varphi^{\bound \dots \bound}(y_1, \dots, y_k) =  R(x,y_1) \land R(x,y_2) \land \dots \land R(x,y_k)$
\end{example}

\begin{example}[$k$-Reachability]
Given a direct graph $G$	, the $k$-reachability problem asks, given a pair vertices $(u,v)$, to check whether they are connected by a path of length $k$. Representing the graph as a binary relation $R(x,y)$ (which means that there is an edge from $x$ to $y$), we can model this problem through the following adorned query:
$
\varphi^{\bound\bound}(x_1, x_{k+1}) =  R(x_1, x_2) \land R(x_2, x_3) \land \dots \land R(x_k, x_{k+1})
$	
Observe that we can also check whether there is a path of length at most $k$ by combining the results of $k$ such queries (one for each length $1, \dots, k$).
\end{example}

\begin{example}[Edge Triangles Detection]
	Given a graph $G = (V, E)$, this problem asks, given an edge $(u,v)$ as the request, whether $(u,v)$ participates in a triangle or not. This task can be expressed as the following adorned query $\varphi^{\bound \bound}_\triangle (x,z) =  R(x,y) \land R(y,z) \land R(x,z)$
	In the reporting version, the goal is to enumerate all triangles participated by edge $(x,z)$, which can also be expressed by the following adorned query
	$
	\varphi^{\bound \bound \free}_\triangle (x,z,y) = R(x,y) \land R(y,z) \land R(x,z)
	$.
\end{example}	

We say that an adorned query is {\em Boolean} if every head variable is bound. In this case, the answer for every access request is also Boolean, i.e., true or false.

\subsection{Problem Statement}

Given an adorned query $\varphi^\eta$ and an input database $D$, our goal is to construct a data structure, such that we can answer any access request that conforms to the access pattern $\eta$ as fast as possible. In other words, an algorithm can be split into two phases:
\begin{itemize}
	\item {\bf Preprocessing phase:} we compute a data structure using space {\em $S$}. 
 \smallskip
	\item {\bf Answering phase:} given an access request, we compute the answer using the data structure built in the preprocessing phase, within time $T$. 
\end{itemize}

In this work, our goal is to study the relationship between the space of the data structure $S$ and the answering time $T$ for a given adorned query $\varphi^\eta$. We will focus on Boolean adorned queries, where the output is just true or false.
\section{Space-Time Tradeoffs via Worst-case Optimal Algorithms}
\label{sec:main1}

Let $\varphi^{\eta}$ be an adorned query and $\nodes_\bound$ denote its bound variables. For any fractional edge cover $\bu$, we define the {\em slack} of $\bu$~\cite{deep2018compressed}  as:
$$ \slack(\bu) := \min_{x \in \vars{\varphi} \setminus \nodes_\bound} \left( \sum_{i: x \in \vars{R_i}} u_i \right). $$
In other words, the slack is the maximum factor by which we can scale down the fractional cover $\bu$ so that it remains a valid edge cover of the non-bound variables in the query\footnote{We will omit the parameter $\bu$ from the notation of $\alpha$ whenever it is clear from the context.}. Hence $\{u_i/\slack(\bu)\}_{i}$ is a fractional edge cover of the nodes in $ \vars{\varphi} \setminus \nodes_\bound$. We always have $\slack(\bu) \geq 1$.

\begin{example}
Consider $\varphi^{\bound \dots \bound}(y_1, \dots, y_k) =  R_1(x,y_1)\land R_2(x,y_2) \land \dots R_k(x, y_k)$ with the optimal fractional edge cover $\bu$, where $u_i=1$ for $i \in \{1, \dots, k\}$. The slack is $\slack(\bu) = k$, since the fractional edge cover $\hat{\bu}$, where $\hat{u}_i = u_i/k = 1/k$ covers the only non-bound variable $x$.	
\end{example}

\begin{theoremrep}\label{thm:main1}
Let $\varphi^{\eta}$ be a Boolean adorned query. Let $\bu$ be any fractional edge cover of $\varphi$. 
Then, for any input database $D$, we can construct a data structure that answers any access request in time $O(T)$ and takes space
$$S = {O}\left(|D| + \prod_{i=1}^n |R_i|^{u_i} / T^\slack  \right)$$
\end{theoremrep}

 \begin{proof}
To simplify the presentation, we will assume that $\varphi$ contains no constants in the body or repeated variables in the same atom, but our approach can be easily extended to cover these cases.
Let $\nodes_\bound = \{x_1, \dots, x_k\}$ be the set of bound variables. Recall that an access request $\ba = (a_1, \dots, a_k)$ corresponds to the query $\varphi[a_1/x_1, \dots, a_k/x_k]$. 	
For every atom $R_i(\bx_i)$ in the query, define $R_i(\ba) = \sigma_{x_j =a_j \mid x_j \in \nodes_\bound \cap \vars{R_i}}(R_i)$. Here, $\sigma_\psi$ is a selection operator that keeps the tuples from $R_i$ that satisfy the condition $\psi$. We say that an access request $\ba$ is {\em valid} if for every atom, $R_i(\ba) \neq \emptyset$.


If $\slack$ is the slack for the fractional edge cover $\bu$, define $\hat{u}_i = u_i / \slack$. Note that $\hat{\bu} = \{\hat{u}_i\}_{i}$ is a fractional edge cover for the query $\varphi[a_1/x_1, \dots, a_k/x_k]$: indeed, it is sufficient to cover only the non-bound variables, since all bound variables are replaced by constants in the query. Hence, using a worst-case optimal join algorithm, we can compute the access request $\varphi[a_1/x_1, \dots, a_k/x_k]$ with running time
	$$ T(\ba) = \prod_{i=1}^n |R_i(\ba)|^{u_i/\slack}.$$
	
The time required is a direct application of the upper bound of worst-case optimal join algorithm running time as described in~\Cref{sec:framework}. We can now describe the data structure we build. First, for every atom $R_i(\bx_i)$ we build a standard hash index that returns in constant time the subset $R_i(\ba)$ for every access request $\ba$. Second, we create a hash index $\mathcal{K}$. Let $J$ be the set of valid access requests such that $T(\ba) > T$. For every $\ba \in J$, we add to the hash index the key-value entry $(\ba, \varphi[a_1/x_1, \dots, a_k/x_k](D))$. In other words, the value is the (boolean) answer to the access request $\ba$.


	We claim that the answer time using the above data structure is at most $O(T)$. Indeed, we first check whether $\ba$ is valid, which we can do in constant time. If it is not valid, we simply output no. If it is valid, we probe the hash index $\mathcal{K}$. If $\ba$ exists in the hash index, we obtain the answer in time $O(1)$ by reading the value of the corresponding entry. Otherwise, we know that $T(\ba) < T$ and hence we can compute the answer to the access request in time $O(T)$ using a worst-case optimal join. 
	
It remains to bound the size of the data structure we constructed during the preprocessing phase. Since the size is $O(|D|+|J|)$, we will bound the size of $J$. Indeed, we have:
	\allowdisplaybreaks
	\begin{align*}
	\hspace{7em} T \cdot |J|  \leq \sum_{\ba \in J} T(\ba) 
	& = \sum_{\ba \in J} \prod_{i} |R_i(\ba)|^{u_i/\slack} \\
	& = \sum_{\ba \in J} 1^{1-1/\slack} \cdot \left( \prod_{i} |R_i(\ba)|^{u_i} \right)^{1/\slack} \\
	& \leq \left(\sum_{\ba \in J} 1 \right)^{1-1/\slack} \cdot \left( \sum_{\ba \in J} \prod_{i} |R_i(\ba)|^{u_i} 	\right)^{1/\slack} \\
	& \leq |J|^{1-1/\slack}  \cdot \prod_{i} |R_i|^{u_i/\slack}
	\end{align*}
	
	The first inequality follows directly from the definition of the set $J$. The second inequality is H{\"o}lders inequality. The third inequality is an application of the query decomposition lemma from~\cite{skewstrikesback}.	By rearranging the terms, we can now obtain the desired bound.
\end{proof}
   
We should note that~\autoref{thm:main1} applies even when the relation sizes are different; this gives us sharper upper bounds compared to the case where  each relation is bounded by the total size of the input. Indeed, if using $|D|$ as an upper bound on each relation, we obtain a space requirement of $O(|D|^{\rho^*}/T^\slack)$ for  achieving answering time $O(T)$, where $\rho^*$ is the fractional edge cover number.
Since $\slack \geq 1$, this gives us at worst a linear tradeoff between space and time, i.e., $S \cdot T = O(|D|^{\rho^*})$. For cases where $\slack \ge 1$, we can obtain better tradeoffs. The full proofs for all results in this paper can be found in~\cite{deep2021space}.

\begin{example}
Continuing the example in this section $\varphi^{\bound \dots \bound}(y_1, \dots, y_k) =  R_1(x,y_1) \land R_2(x,y_2) \land \dots \land R_k(x, y_k)$. We obtain an improved tradeoff: $S \cdot T^{k} = O(|D|^k)$\footnote{For all results in this paper, $S$ includes the space requirement of the input as well. If we are interested in only the space requirement of the constructed data structure, then the $|D|$ term in the space requirement of~\autoref{thm:main1} can be removed.}. Note that this result matches the best-known space-time tradeoff for the $k$-\setdisj problem~\cite{goldstein2017conditional}. (Note that all atoms use the same relation symbol $R$, so $|R_i| = |D|$ for every $i=1, \dots, k$. )
\end{example}

\begin{example}[Edge Triangles Detection] For the Boolean version, it was shown in~\cite{goldstein2017conditional} that -- conditioned on the strong set disjointness conjecture -- any data structure that achieves answering time $T$ needs space  $S = \Omega(|E|^2/T^2)$. A matching upper bound can be constructed by using a fractional edge cover $\mathbf{u} = (1,1,0)$ with slack $\slack = 2$. Thus,~\autoref{thm:main1} can be applied to achieve answering time $T$ using space $S  = O(|E|^2/T^2)$. Careful inspection reveals that  a different fractional edge cover $\mathbf{u} = (1/2,1/2,1/2)$ with slack $\alpha = 1$,  achieves a better tradeoff. Thus,~\autoref{thm:main1} can be applied to obtain the following corollary.
\end{example}

\begin{corollary}
	For a graph $G=(V,E)$, there exists a data structure of size $S = O(|E|^{3/2}/T)$ that can answer the edge triangles detection problem in $O(T)$.
\end{corollary}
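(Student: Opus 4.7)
The plan is to apply \autoref{thm:main1} directly, with the fractional edge cover singled out in the example preceding the corollary. Recall that the Boolean adorned query for edge triangle detection is
$$\varphi^{\bound\bound}_\triangle(x,z) = R(x,y) \land R(y,z) \land R(x,z),$$
so $\nodes_\bound = \{x,z\}$ and the only non-bound variable is $y$. All three atoms use the same binary relation $R$, for which $|R| = |E|$, and $|D| = O(|E|)$.

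First I would take the uniform cover $\bu = (u_{R(x,y)}, u_{R(y,z)}, u_{R(x,z)}) = (1/2, 1/2, 1/2)$ and check it is a valid fractional edge cover of $\nodes = \{x,y,z\}$: each variable is incident to exactly two of the three atoms, so the total weight on each variable is $1/2 + 1/2 = 1 \ge 1$. Next I would compute the slack: since the only variable in $\nodes \setminus \nodes_\bound$ is $y$, and $y$ is covered by $R(x,y)$ and $R(y,z)$ with total weight $1$, we get $\slack(\bu) = 1$.

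Plugging these quantities into \autoref{thm:main1} yields a data structure with answering time $O(T)$ and space
$$S = O\!\left(|D| + \prod_{F \in \edges}|R_F|^{u_F}\big/T^{\slack(\bu)}\right) = O\!\left(|E| + |E|^{1/2} \cdot |E|^{1/2} \cdot |E|^{1/2} \big/ T\right) = O\!\left(|E| + |E|^{3/2}/T\right).$$
In the relevant regime $T \le |E|^{1/2}$ the second term dominates, giving $S = O(|E|^{3/2}/T)$; for $T > |E|^{1/2}$ the bound $S = O(|E|)$ absorbs the claimed bound trivially.

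There is no real obstacle here: the only nontrivial design choice is picking the cover $\bu = (1/2,1/2,1/2)$ rather than the integral cover $(1,1,0)$ used in the prior conjectured lower bound. Integral covers force one atom's edge weight to zero, which cannot help $y$ and thus wastes slack; the symmetric half-integral cover both keeps the product $\prod_F |R_F|^{u_F} = |E|^{3/2}$ small and balances the coverage of $y$, and \autoref{thm:main1} automatically turns this structural choice into the stated tradeoff.
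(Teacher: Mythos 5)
Your proposal is correct and follows exactly the route the paper takes: choose the symmetric fractional edge cover $\bu = (1/2,1/2,1/2)$, observe that the unique non-bound variable $y$ is covered with total weight $1$ so $\slack(\bu)=1$, and plug into Theorem~\ref{thm:main1} to get $S = O(|E| + |E|^{3/2}/T)$. The paper likewise contrasts this half-integral cover with the integral $(1,1,0)$ used in the prior (now refuted) conjectured lower bound, and likewise elides the additive $O(|E|)$ term with the understanding that the tradeoff is of interest in the regime $T \le \sqrt{|E|}$ where the second term dominates.
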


The data structure implied by~\autoref{thm:main1} is always better when $T \leq \sqrt{|E|}$\footnote{All answering times $T > \sqrt{|E|}$ are trivial to achieve using linear space by using the data structure for $T' = \sqrt{E}$ and holding the result back until time $T$ has passed.}, thus refuting the conditional lower bound in~\cite{goldstein2017conditional}. We should note that this does not imply that the strong set disjointness conjecture is false, as we have observed an error in the reduction used by~\cite{goldstein2017conditional}.

\begin{example}[Square Detection]
	Beyond triangles, we consider the edge square detection problem, which checks 
	whether a given edge belongs in a square pattern in a graph $G = (V,E)$,
$ \varphi^{\bound\bound}_\square(x_1, x_2) =  R_1(x_1,x_2) \land R_2(x_2,x_3) \land R_3(x_3,x_4) \land R_4(x_4,x_1).$
Using the fractional edge cover $\bu = (1/2,1/2,1/2,1/2)$ with slack $\slack =1$,
we obtain a tradeoff $S = O(|E|^2/T)$.
\end{example}

\section{Space-Time Tradeoffs via Tree Decompositions}
\label{sec:main2}

\autoref{thm:main1} does not always give us the optimal tradeoff. For the $k$-reachability problem with the adorned query $\varphi^{\bound \bound} (x_1, x_{k+1}) =  R_1(x_1, x_2) \land \dots \land R_k(x_k, x_{k+1})$, \autoref{thm:main1} gives a tradeoff $S \cdot T = |D|^{\lceil (k+1)/2\rceil}$, by taking the optimal fractional edge covering number $\rho^* = \lceil (k+1)/2\rceil$ and slack $\slack=1$, which is far from efficient. In this section, we will show how to leverage tree decompositions to further improve the space-time tradeoff in~\autoref{thm:main1}. 

Again, let $\varphi^{\eta}$ be an adorned query. Given a set of nodes $C \subseteq \nodes$,  a $C$-connex tree decomposition of $\varphi$ is a pair $(\htree,A)$, where $(i)$ $\htree$ is a tree decomposition of $\varphi$, and $(ii)$ $A$ is a connected subset of the tree nodes such that the union of their variables is exactly $C$. For our purposes, we choose $C = \nodes_\bound$. Given a $\nodes_\bound$-connex tree decomposition, we orient the tree from some node in $A$. We then define the bound variables for the bag $t$, $\nodes^t_\bound$ as the variables in $\bag_t$ that also appear in the bag of some ancestor of $t$. The free variables for the bag $t$ are the remaining variables in the bag, $\nodes^t_\free = \bag_t \setminus \nodes^t_\bound$.

\begin{example}
	Consider the $5$-path query $\varphi^{\bound \bound} (x_1, x_{6}) = R_1(x_1, x_2) \land \dots \land R_5(x_5, x_{6})$. Here, $x_1$ and $x_6$ are the bound variables.~\autoref{fig:cfhw} shows the unconstrained decomposition as well as the $C$-connex decomposition for $\varphi^{\bound \bound} (x_1, x_{6})$, where $C = \{x_1, x_6\}$. The root bag contains the bound variables $x_1, x_6$. Bag $\mB_{t_2}$ contains $x_1, x_6$ as bound variables and $x_2, x_5$ as the free variables. Bag $\mB_{t_3}$ contains $x_2, x_5$  as bound variables for $\mB_{t_3}$ and $x_3, x_4$ as free variables.
	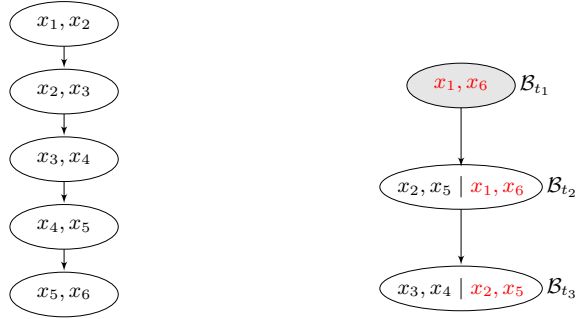
\begin{figure}[t]
		\begin{subfigure}{0.45\linewidth}
					\centering
					\scalebox{.9}{\begin{tikzpicture}
				\tikzset{edge/.style = {->,> = latex'},
					vertex/.style={circle, thick, minimum size=5mm}}
				\def\x{0.25}
				
				\begin{scope}[fill opacity=1]
				
				\draw[] (0,-2) ellipse (0.8cm and 0.33cm) node {\small $x_1, x_2$};
				\draw[] (0,-3) ellipse (0.8cm and 0.33cm) node {\small $x_2, x_3$};
				\draw[] (0,-4) ellipse (0.8cm and 0.33cm) node {\small $x_3, x_4$}; 
				\draw[] (0,-5) ellipse (0.8cm and 0.33cm) node {\small $x_4, x_5$};			  			  
				\draw[] (0,-6) ellipse (0.8cm and 0.33cm) node {\small $x_5, x_6$};
				
				\draw[edge] (0,-2.33) -- (0,-2.65);
				\draw[edge] (0,-3.33) -- (0,-3.65);			  
				\draw[edge] (0,-4.33) -- (0,-4.65);
				\draw[edge] (0,-5.33) -- (0,-5.65);
				\end{scope}	
				\end{tikzpicture}}
		\label{subfig:1}			
		\end{subfigure}
		\begin{subfigure}{0.45\linewidth}
				\vspace{2em}
					\centering
								\scalebox{.9}{\begin{tikzpicture}
				\tikzset{edge/.style = {->,> = latex'},
					vertex/.style={circle, thick, minimum size=5mm}}
				\def\x{0.25}
				
				\begin{scope}[fill opacity=1]

				\draw[fill=black!10] (5,-2) ellipse (0.8cm and 0.33cm) node {\small ${\color{red} x_1, x_6}$};
				\node[vertex]  at (6.1,-2) {$\bag_{t_1}$};
				\draw[] (5,-3.5) ellipse (1.2cm and 0.33cm) node {\small \small $\prq{x_2, x_5}{\color{red} x_1, x_6}$};
				\node[vertex]  at (6.5,-3.5) {$\bag_{t_2}$};				
				\draw[] (5,-5) ellipse (1.2cm and 0.33cm) node {\small \small $\prq{x_3, x_4}{\color{red} x_2, x_5}$};						
				\node[vertex]  at (6.5,-5) {$\bag_{t_3}$};								
				\draw[edge] (5,-2.33) -- (5,-3.2);
				\draw[edge] (5,-3.83) -- (5,-4.65);			
				\end{scope}	
				\end{tikzpicture} 
				}
		\end{subfigure}
		
		\caption{Two tree decompositions for the length-5 path query: the left is unconstrained, while the right is a $C$-connex decomposition with  
			$C = \{x_1, x_6 \}$. The bound variables are colored red. The nodes in $A$ are colored grey.}
		\label{fig:cfhw}	
	\end{figure}
		
\end{example}

Next, we use a parameterized notion of width for the $\nodes_\bound$-connex tree decomposition that was introduced in~\cite{deep2018compressed}. The width is parameterized by a function $\delta$  that maps each node $t$ in the tree to a non-negative number, such that $\delta(t)=0$ whenever $t \in A$. The intuition here is that we will spend $O(|D|^{\delta(t)})$ in the node $t$ while answering the access request. The parameterized width of a bag $\bag_t$ is now defined as:
$ \rho_t(\delta) = \min_{\bu} \left( \sum_{F} u_F - \delta(t) \cdot \slack \right)$
where $\bu$ is a fractional edge cover of the bag $\bag_t$, and $\slack$ is the slack (on the bound variables of the bag). The $\delta$-width of the decomposition is then defined as $\max_{t \notin A} \rho_t(\delta)$. Finally, we define the $\delta$-height as the maximum-weight path from the root to any leaf, where the weight of a path $P$ is $\sum_{t \in P} \delta(t)$. We now have all the necessary machinery to state our second main theorem.

\begin{theorem}\label{thm:main2}
	Let $\varphi^\eta$ be a Boolean adorned query. Consider any $\nodes_\bound$-connex tree decomposition of $\varphi$. For some parametrization $\delta$ of the decomposition, let $f$ be its $\delta$-width, and $h$ be its $\delta$-height. Then, for any input database $D$, we can construct a data structure that answers any access request in time $T = O(|D|^h)$ with space $S = O(|D| + |D|^f)$.
\end{theorem}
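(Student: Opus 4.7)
The plan is to process the $\nodes_\bound$-connex tree decomposition bottom-up in preprocessing and top-down in answering, instantiating \autoref{thm:main1} once per bag. First I would root the decomposition at a node inside $A$, which is legitimate because $A$ is a connected subset whose variables are exactly $\nodes_\bound$; consequently every bag in $A$ has all its variables pinned by the access request, which matches the convention $\delta(t)=0$ for $t\in A$ and means no local data structure is needed there. All nontrivial work lives at bags $t\notin A$, each of which inherits an assignment to $\nodes^t_\bound$ from its unique parent and must decide whether an extension to $\nodes^t_\free$ exists that (i) satisfies the atoms of $\bag_t$ and (ii) is consistent with every child's subtree.

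For each such $t$, I would view the atoms $\{R_F : F\in\bag_t\}$ as a Boolean adorned sub-query with bound variables $\nodes^t_\bound$ and free variables $\nodes^t_\free$, and apply \autoref{thm:main1} with answering-time parameter $T_t=|D|^{\delta(t)}$. Picking the fractional edge cover $\bu$ that realizes the minimum in the definition of $\rho_t(\delta)=\min_\bu(\sum_F u_F-\delta(t)\cdot\slack)$ yields a local data structure of size $|D|^{\rho_t(\delta)}$. Summing over the (constant in data complexity) number of bags outside $A$ and adding the $O(|D|)$ cost of storing the raw relations gives total space $O(|D|+|D|^f)$, where $f=\max_{t\notin A}\rho_t(\delta)$ is the $\delta$-width.

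For the answering phase, given the access request I would seed the values of $\nodes_\bound$ at the root bag and descend recursively. At bag $t$ I use the local oracle to iterate through candidate extensions of $\nodes^t_\free$ consistent with the inherited bound values in time $O(|D|^{\delta(t)})$, and for each extension pass the induced assignments to $\nodes^c_\bound$ down to each child $c$, short-circuiting as soon as every child returns true. With constantly many children per bag this yields the recurrence $T_t=O(|D|^{\delta(t)}\cdot\max_c T_c)$, which unrolls to $|D|^{\sum_{t'\in P}\delta(t')}$ along the heaviest root-to-leaf path $P$, giving worst-case answering time $O(|D|^h)$ as claimed.

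The main obstacle is bridging \autoref{thm:main1}'s Boolean decision interface with the enumeration behavior I need at each bag: to feed children, the local oracle must not merely answer yes/no but stream extensions of $\nodes^t_\free$ one at a time within the $|D|^{\delta(t)}$ budget. Resolving this requires peering into the construction behind \autoref{thm:main1} and arguing that its worst-case optimal join core naturally produces tuples with amortized delay matching the claimed $T_t$, so that the extra $\max_c T_c$ factor per extension folds cleanly into the multiplicative recurrence. One must also check that when child oracles are viewed as virtual atoms over $\nodes^c_\bound\subseteq\bag_t$, they do not inflate the cover used at bag $t$ beyond $\rho_t(\delta)$; this follows because they are consulted as query-time lookups rather than added to the AGM product that bounds the local space.
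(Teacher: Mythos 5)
Your proposal matches the paper's proof in overall structure: both root the $\nodes_\bound$-connex decomposition at $A$, instantiate the data structure of \autoref{thm:main1} independently at each bag outside $A$ using the cover that realizes $\rho_t(\delta)$, and answer by descending the tree so that space is bounded by the $\delta$-width and time by the $\delta$-height. Your observation that the child subtrees are consulted as query-time lookups (so they do not enter the AGM product that bounds the per-bag space) is also correct and matches the intent.

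The gap is in how you resolve the Boolean-interface-versus-enumeration obstacle you flag. You want the per-bag oracle to stream extensions of $\nodes^t_\free$ within the $|D|^{\delta(t)}$ budget and argue this falls out of the worst-case optimal join engine inside \autoref{thm:main1}. That is fine for a \emph{light} request, where the residual AGM bound is below $|D|^{\delta(t)}$ and the join is simply evaluated fresh. But for a \emph{heavy} request -- one placed into the precomputed set $J$ of \autoref{thm:main1} -- the data structure stores only a single Boolean bit for $\bag_t$'s local join, not its witnesses. Enumerating the extensions fresh would exceed the $|D|^{\delta(t)}$ budget by definition of heaviness, while storing all extensions would blow the space well past $|D|^{\rho_t(\delta)}$ (the number of witnesses can be as large as the full AGM bound $|D|^{\rho_t(\delta)+\delta(t)\cdot\slack}$). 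So at an internal bag the recursion has no way to proceed when a heavy request returns ``true.'' The fix, which the paper's construction relies on even though its write-up only spells out the ``false'' case explicitly, is to change \emph{what} is stored for heavy requests at bag $t$: during preprocessing, store the Boolean answer to the entire \emph{subtree} rooted at $t$ under that bound assignment, not merely the answer to $\bag_t$'s local join. This costs the same $O(1)$ per entry so the $O(|D|^{\rho_t(\delta)})$ space bound is unchanged, and at answering time a heavy lookup short-circuits the whole branch in $O(1)$ (false kills it, true certifies it), while a light request is enumerated by WCOJ and recursed exactly as you describe. With that modification your recurrence $T_t = O(|D|^{\delta(t)} \cdot \max_c T_c)$ is justified and the proof goes through.
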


The function $\delta$ allows us to trade off between time and space. If we set $\delta(t) = 0$ for every node $t$ in the tree, then the $\delta$-height becomes  $O(1)$, while the $\delta$-width equals to the fractional hypetree width of the decomposition. As we increase the values of $\delta$ in each bag, the $\delta$-height increases while the $\delta$-width decreases, i.e., the answer time $T$ increases while the space decreases. Additionally, we note that the tradeoff from~\autoref{thm:main2} is at least as good as the one from~\autoref{thm:main1}. Indeed, we can always construct a tree decomposition where all variables reside in a single node of the tree. In this case, we recover exactly the tradeoff from~\autoref{thm:main1}.

\begin{toappendix}
\begin{proof}[Proof of \Cref{thm:main2}]
	We introduce some new notation. Let $\tree = (\htree, A)$ denote the $\nodes_\bound$-connex tree decomposition with $f$ as its $\delta$-width, and $h$ as its $\delta$-height. For each node $t \in \htree \setminus A$, we denote by $\manc(t)$  the union of all the bags for the nodes that are the ancestors of $t$, $\nodes_\bound^t = \bag_t \cap \manc(t)$ and $\nodes_\free^t = \bag_t \setminus \nodes_\bound^t$. Intuitively, $\nodes_\bound^t (\nodes_\free^t)$ are the bound (resp. free) variables for the bag $t$ as we traverse the tree top-down.
	For example, in the decomposition on the right hand side of Figure 1, $A = \bag_{t_1}$, $\manc(t_2) = \{x_1, x_6\},$ $\nodes_\bound^t = \{x_1, x_6\}$ and $\nodes_\free^t = \{x_1, x_5\}$. For node $t_3$, $\manc(t_3) = \{x_2, x_5, x_1, x_6\}$, $\nodes_\bound^t = \{x_2, x_5\}$ and $\nodes_\free^t = \{x_3, x_4\}$.
	
	Since we will operate on subtrees of the decomposition, we will use $\bag^\prec_{t}$ to denote the union of all bags in the subtree rooted at $t$ (including $t$). For a given set of variables $I$, we also define $\edges_I = \{ R_i \mid \vars{R_i} \subseteq I\}$ and the join query $J(I) = \big(  \land_{R_i \in \edges_I} R_i(\bx_i)  \big)$ that denotes the join of all atoms that contain only the variables present in $I$. 
	
	 \paragraph*{Data Structure Construction}~\Cref{algo:preprocessing} shows the steps for the data structure construction. We apply \autoref{thm:main1} to each bag (except the root bag) in $\htree$ with the following parameters: $(i)$ fractional edge cover $\mathbf{u}$ corresponding to bag $\bag_t$; $(ii)$ adorned query $\varphi^\eta_t$ corresponding to bag $t$ where the { body of the query} is the join of all atoms that contain { only the variables} present in the bag (i.e. $J(\bag_t)$); and $(iii)$ {head of the  $\varphi^\eta_t$ with bound variables $\nodes_\bound^t = \manc(t) \cap \bag_t$.} \autoref{thm:main1} returns the hash index $\mathcal{K}(t)$ and its space requirement corresponding to bag $\bag_t$ is $S = O(|D| + |D|^{\rho_t(\delta)}) \leq O(|D| + |D|^f)$. This follows directly from the definition of $\delta(t)$ width of bag $t$ which is assumed to be at most $f$. 
	 
	 Next, we create a modified hash index $\mathcal{K'}(t)$ as follows: for each key $\ba$ of the hash index $\mathcal{K}(t)$, we store a yes or no answer to the question whether there exists a valuation for all variables in $\bag^\prec_{t}$ that satisfies the join query $J(\bag^\prec_{t})$ after fixing $\nodes_b^t$ to $\ba$.

  \begin{algorithm}[!t]
	 	\SetCommentSty{textsf}
	 	\SetKwInOut{Input}{Input}
	 	\SetKwInOut{Output}{Output}
	 	\SetKwProg{myproc}{\textsc{procedure}}{}{}
	 	\Input{Input query $\varphi^\eta$, database instance $D$, tree decomposition $\htree$ and edge cover $\bu$ for each node.}
	 	\Output{hash index $\mathcal{K'}(t)$ for each node $t \in \htree$}
	 	\BlankLine
	 	\ForEach{node $t$ in $\htree \setminus A$ }{
	 		$\mathcal{K}(t) \leftarrow $ hash index from \autoref{thm:main1} with cover $\bu$ and $\varphi^\eta_t(\nodes^t_\bound) = \big(  \land_{R_i \in \edges_{\bag_t}} R_i(\bx_i)  \big)$
	 		
	 		$\mathcal{K'}(t) \leftarrow \mathcal{K}(t)$ \tcc*{create a copy}
	 		
	 		\ForEach{$\ba \in \mathcal{K'}(t)$}{
	 			$\gamma^\eta_t(\nodes^t_\bound) = \big(  \land_{R_i \in \edges_{\bag^\prec_{t}}} R_i(\bx_i)  \big)$ \tcc*{query corresponding to subtree rooted at $t$}
	 			
	 			$\mathcal{K'}(t)[a] \leftarrow \gamma^\eta_t(\nodes^t_\bound \leftarrow \ba)$ \tcc*{Boolean answer to whether $\gamma^\eta_t(\nodes^t_\bound \leftarrow \ba)$ has a satisfying assignment when bound variables are fixed to $\ba$}
 			}
	 	}
	 	\BlankLine
	 	\KwRet{$\mathcal{K'}(t)$ for all nodes $t$}
	 	\caption{Data Structure for \autoref{thm:main2}}
	 	\label{algo:preprocessing}
	 \end{algorithm}

  \begin{algorithm}[!t]
	\SetCommentSty{textsf}
	\SetKwInOut{Input}{Input}
	\SetKwInOut{Output}{Output}
	\SetKwProg{myproc}{\textsc{procedure}}{}{}
	\SetKwFunction{enum}{\textsc{findResult}}	
	\SetKwFunction{topdown}{\textsc{Topdown}}
	\SetKwFunction{break}{\textsf{break}}
	\SetKwData{temp}{\textsf{temp}}	
	\Input{Input query $\varphi^\eta$, database instance $D$, tree decomposition $\htree$, hash index $\mathcal{K'}(t)$, access request $\ba$.}
	\Output{$\varphi^\eta(\bx\leftarrow\ba)(D)$}
	\BlankLine
	
	\myproc{\enum{}}{
		\topdown{$A, \ba$} \tcc*{$A$ is the root of $\htree$}
	}

	\myproc{\topdown{$t, v$}}{
		\tcc{$t$ is node in $\htree$ and $v$ is the valuation over bound variables of $t$}
		\If{$t == A$ \textbf{and} $v$ is not valid}{
			\KwRet{\textbf{false}}
		}
		\If{$v \in \mathcal{K'}(t)$}{
			\KwRet{$\mathcal{K'}(t)[v]$} \tcc*{access request present in the hash index}
		}
		$\mL \leftarrow $result of WCOJ with $\bu$ for $t$ on $\varphi^\eta_t$
		
		\ForEach{$w \in \mL$}{
			\temp $\leftarrow $\textbf{ true}
			
			\ForEach{child $t'$ of $t$}{
				\If{\topdown{$t', \pi_{\nodes^{t'}_\bound}(w)$} is \textbf{false}}{
					\temp $\leftarrow $\textbf{false} \tcc*{If a $w$ does not lead to a join for any child $t'$, break and continue search}
					\break
				}
			}
			\If{\temp  is \textbf{true}}{
				\KwRet{\textbf{true}} \tcc*{found a $w$ that leads to a join for all children of $t$}
			}
		}
		\KwRet{\textbf{false}} \tcc*{couldn't find any $w$}
	}
	\caption{Query answering for \autoref{thm:main2}}
	\label{algo:enumerate}
\end{algorithm}

	\paragraph*{Query Answering} We now describe the query answering algorithm. Let $C = \{x_1, \dots , x_k\}$ and access request $\mathbf{a} = (a_1, \dots, a_k)$. We first need to check whether $\mathbf{a}$ is valid. If the request is not valid, we can simply output no. This can be done in constant time after creating hash indexes of size $O(|D|)$ on the base relations during the preprocessing phase. If the access request is valid, the second step is to check whether $Q(\mathbf{a})$ is true or false. Let $\mathcal{P}$ denote the set of bags that are children of root bag. Then, for each bag $\bag_t \in \mathcal{P}$, we check whether $\pi_{\nodes_\bound^t} (\mathbf{a}) \in \mathcal{K'}(t)$. If it is stored, it means that that running time of $\pi_{\nodes_\bound^t} (\mathbf{a})$ is greater than $O(|D|^{\delta(t)})$. If the entry for $\pi_{\nodes_\bound^t} (\mathbf{a}) $ is false in the data structure, we can return false\footnote{we return to the caller of the function since the query answering algorithm is recursive.} since we know that no output tuple can be formed by the subtree rooted at $\bag_t$. Similarly, if the entry for $\pi_{\nodes_\bound^t} (\mathbf{a}) $ is true, we can immediately return true since we know that there is at least one valid tuple that can be formed by the subtree rooted at $\bag_t$.
	
	If there is no entry for $\pi_{\nodes_\bound^t} (\mathbf{a})$ in $\mathcal{K'}(t)$, this means that answering time of evaluating the join at node $t$ is $T \leq O(|D|^{\delta(t)})$. Thus, we can evaluate the join for the bag by fixing $\nodes_\bound^t$ as $\pi_{\nodes_\bound^t} (\mathbf{a})$ using any worst-case optimal join algorithm (WCOJ), which guarantees that the total running time is at most $O(|D|^{\delta(t)})$. If no output is generated, the algorithm outputs false since no output tuple can be formed by subtree rooted at $\bag_t$. If there is output generated, then there can be at most $O(|D|^{\delta(t)})$ tuples at bag $\bag_t$ over the set of all non-bound variables in the bag. For each of these tuples, we recursively proceed to the children of bag $\bag_t$ and repeat the algorithm as shown in~\Cref{algo:enumerate}. Each fixing of variables at $\bag_t$ acts as the bound variables for the children bag. In the worst case, all bags in $\htree$ may require join processing. Since the query size is a constant, it implies that the number of root to leaf paths are also constant. Thus, the answering time is dominated by the  root to leaf path with the largest $\delta_t$ sum, i.e the $\delta$-height of the decomposition. Thus, $T = O(|D|^{\sum_{t \in P} \delta(t)}) = O(|D|^h)$.
\end{proof}
\end{toappendix}

\begin{example} \label{example:5path}
	We continue with the $5$-path query. Since $\bag_{t_1} = \{x_1, x_6\} \in A$, we assign $\delta(t_1)=0$. For $\bag_{t_2} = \{x_1,x_2,x_5,x_6\}$, the only valid fractional edge cover assigns weight 1 to both $R_1,R_5$ and has slack 1. Hence, if we assign $\delta(t_2) = \tau$ for some parameter $\tau$, the width is $2-\tau$. For $\bag_{t_3} = \{x_2,x_3,x_4, x_5\}$, the only fractional cover also assigns weight 1 to both $R_2, R_4$, with slack $1$ again. Assigning $\delta(t_3) = \tau$, the width becomes $2-\tau$ for $t_3$ as well. 	Hence, the $\delta$-width of the tree decomposition is $2-\tau$, while the $\delta$-height is $2 \tau$. Plugging this to~\autoref{thm:main2}, it gives us a tradeoff with answering time $T = O(|E|^{2\tau})$ and space usage $S = O(|E|+|E|^{2-\tau})$, which matches the state-of-the-art result in~\cite{goldstein2017conditional}. 
\end{example}

For the $k$-reachability problem, a general tradeoff $S \times T^{2/(k-1)} = O(|D|^2)$ was also shown by~\cite{goldstein2017conditional} using a careful recursive argument. The data structure generated using~\autoref{thm:main2} is able to recover the tradeoff. In particular, we obtain the answering time as $T = O(|E|^{(k-1)\tau/2})$ using space $S = O(|E|+|E|^{2-\tau})$.

\begin{example} Consider a variant of the square detection problem: given two vertices, the goal is to decide whether they occur in two opposites corners of a square, which can be captured by the following adorned query: 
$$ \varphi^{\bound \bound}(x_1,x_3) =  R_1(x_1, x_2) \land R_1(x_2, x_3) \land R_3(x_3, x_4) \land R_4(x_4,x_1).$$
\autoref{thm:main1} gives a tradeoff with answering time $O(T)$ and space $O(|E|^2/T)$. But we can obtain a better tradeoff using~\autoref{thm:main2}. Indeed, consider the tree decomposition where we have a root bag $t_1$ with $\bag_{t_1} = \{x_1,x_3\}$, and two children of $t_1$ with Boolean $\bag_{t_2} = \{x_1,x_2, x_3\}$ and $\bag_{t_3} = \{x_1,x_3, x_4\}$. For $\bag_{t_2}$, we can see that if assigning a weight of $1$ to both hyperedges, we get a slack of $2$. Hence,  if $\delta(t_2) = \tau$, the $\delta$-width is $2-2\tau$. Similarly for $t_3$, we assign $\delta(t_3) = \tau$, for a $\delta$-width with $2-2\tau$. Applying~\autoref{thm:main2}, we obtain a tradeoff with time $T = O(|E|^\tau)$ (since both root-leaf paths have only one node), and space $S = O(|E| + |E|^{2-2\tau})$. So the space usage can be improved from $O(|E|^2/T)$ to $O(|E|^2/T^2)$.
\end{example}
 
\section{CQs with Negation} \label{sec:negation}

In this section, we present a simple but powerful extension of our result to adorned Boolean CQs with negation. 
A CQ with negation, denoted as $CQ^\neg$, is a CQ where some of the atoms can be negative, i.e., $\neg R_i(\bx_i)$ is allowed. For $\varphi \in CQ^\neg$, we denote by $\varphi^+$ the conjunction of the positive atoms in $\varphi$ and $\varphi^-$ the conjunction of all negated atoms. A $CQ^\neg$ is said to be \emph{safe} if every variable appears in at least some positive atom. In this paper, we restrict our scope to the class of safe $CQ^\neg$, a standard assumption~\cite{wei2003containment,nash2004processing} ensuring that query results are well-defined and do not depend on domains.

{Given a query $\varphi \in CQ^\neg$, we build the data structure from~\autoref{thm:main2} for $\varphi^+$ but impose two constraints on the decomposition: $(i)$ no leaf node(s) contains any free variables, $(ii)$ for every negated atom $R^-$, all variables of $R^-$ must appear together as bound variables in some leaf node(s). In other words, there exists a leaf node such that $\vars{R^-}$ is present in it. It is easy to see that such a decomposition always exists. Indeed, we can fix the root bag to be $C = \mathcal{V}_\bound$, its child bag with free variables as $\vars{\varphi^+} \setminus C$ and bound variables as $C$, and the leaf bag, which is connected to the child of the root, with bound variables as $\vars{\varphi^-}$ without free variables. Observe that the bag containing $\vars{\varphi^+}$ free variables can be covered by only using the positive atoms since $\varphi$ is safe. The intuition is the following: during the query answering phase, we wish to find the join result over all variables $\mathcal{V}_\free$ before reaching the leaf nodes; and then, we can check whether there the tuples satisfy the negated atoms or not, in $O(1)$ time. The next example shows the application of the algorithm to adorned path queries containing negation.
	
	\begin{example} \label{ex:2}
		Consider the query $Q^{\bound \bound}(x_1, x_6) =  R(x_1, x_2) \land \neg S(x_2, x_3) \land T(x_3, x_4) \land \neg U(x_4,x_5) \land$ $V(x_5, x_6)$. Using the decomposition in~\autoref{fig:decom}, we can now apply~\autoref{thm:main2} to obtain the tradeoff $S = O(|D|^3/\tau)$ and $T = O(\tau)$. Both leaf nodes only require linear space since a single atom covers the variables. Given an access request, we check whether the answer for this request has been materialized or not. If not, we proceed to the query answering phase and find at most  $O(\tau)$ answers after evaluating the join in the middle bag. For each of these answers, we can now check in constant time whether the tuples formed by values for $x_2, x_3$ and $x_4, x_5$ are not present in relations $S$ and $U$ respectively.
	\end{example}
	
	For adorned queries where $\mathcal{V}_\bound \subseteq \vars{\varphi^-}$, we can further simplify the algorithm. In this case, we no longer need to create a constrained decomposition since the check to see if the negated relations are satisfied or not can be done in constant time at the root bag itself. Thus, we can directly build the data structure from~\autoref{thm:main2} using the query $\varphi^+$.

\begin{example}[Open Triangle Detection] \label{ex:3} Consider the query $\varphi^{\bound \bound}(x_2, x_3) =$ $R_1(x_1, x_2)$ $ \land \neg R_2(x_2, x_3) \land R_3(x_1, x_3)$, where $\varphi^-$ is $\neg R_2(x_2, x_3)$ and $\varphi^+$ is $R_1(x_1, x_2) \land R_3(x_1, x_3)$ with the adorned view as $\varphi^{+\bound \bound}(x_2, x_3) =  R_1(x_1, x_2) \land R_3(x_1, x_3)$. Observe that $\{x_2, x_3\} \subseteq \vars{\varphi^-}$. We apply~\autoref{thm:main2} to obtain the tradeoff $S = O(|E|^2/ \tau^2)$ and $T = O(\tau)$ with root bag $C = \{x_2, x_3\}$, its child bag with $\mathcal{V}_\bound = C$ and $\mathcal{V}_\free = \{x_1\}$, and the leaf bag to be $\mathcal{V}_\bound = C$ and $\mathcal{V}_\free = \emptyset$. Given an access request $(a,b)$, we check whether the answer for this request has been materialized or not. If not,  we traverse the decomposition and evaluating the join to find if there exists a connecting value for $x_1$. For the last bag, we simply check whether $(a,b)$ exists in $R_2$ or not  in $O(1)$ time.
\end{example}

\introparagraph{A note on optimality} It is easy to see that the algorithm obtained for Boolean CQs with negation is conditionally optimal assuming the optimality of~\autoref{thm:main2}. Indeed, if all negated relations are empty, the join query is equivalent to $\varphi^+$ and the algorithm now simply applies~\autoref{thm:main2} to $\varphi^+$. In ~\autoref{ex:3}, assuming relation $R_2$ is empty, the query is equivalent to set intersection whose tradeoffs are conjectured to be optimal.

}

\begin{figure}    
\centering
	\scalebox{0.85}{
	
	\tikzset{every picture/.style={line width=0.75pt}} 
	\centering
	\begin{tikzpicture}[x=0.75pt,y=0.75pt,yscale=-1,xscale=1]
		
		\draw[fill=black!10]   (267,49) .. controls (267,37.95) and (288.71,29) .. (315.5,29) .. controls (342.29,29) and (364,37.95) .. (364,49) .. controls (364,60.05) and (342.29,69) .. (315.5,69) .. controls (288.71,69) and (267,60.05) .. (267,49) -- cycle ;
		\draw   (215,114) .. controls (215,102.95) and (260.22,94) .. (316,94) .. controls (371.78,94) and (417,102.95) .. (417,114) .. controls (417,125.05) and (371.78,134) .. (316,134) .. controls (260.22,134) and (215,125.05) .. (215,114) -- cycle ;
		\draw   (337,180) .. controls (337,168.95) and (358.71,160) .. (385.5,160) .. controls (412.29,160) and (434,168.95) .. (434,180) .. controls (434,191.05) and (412.29,200) .. (385.5,200) .. controls (358.71,200) and (337,191.05) .. (337,180) -- cycle ;
		\draw   (205,179) .. controls (205,167.95) and (226.71,159) .. (253.5,159) .. controls (280.29,159) and (302,167.95) .. (302,179) .. controls (302,190.05) and (280.29,199) .. (253.5,199) .. controls (226.71,199) and (205,190.05) .. (205,179) -- cycle ;
		\draw[->]    (315.5,69) -- (316,94) ;
		\draw[->]   (314,134) -- (253.5,159) ;
		\draw[->]    (314,134) -- (388.5,160) ;
		
		\draw (300,45) node [anchor=north west][inner sep=0.75pt]   [align=left] {$\displaystyle \color{red}{x_{1} ,x_{6}}$};
		\draw (240,104) node [anchor=north west][inner sep=0.75pt]   [align=left] {$\displaystyle \textcolor[rgb]{0,0,0}{\ x_{2} ,\ x_{3} ,\ x_{4} ,\ x_{5} \ |}\color{red} { \hspace{0.6em} x_1, x_6}$};
		\draw (233,169) node [anchor=north west][inner sep=0.75pt]   [align=left] {$\displaystyle \textcolor[rgb]{0,0,0}{|} \color{red} {\hspace{0.6em} x_2, x_3}$};
		\draw (371,171) node [anchor=north west][inner sep=0.75pt]   [align=left] {$\displaystyle \textcolor[rgb]{0,0,0}{|} \color{red} { \hspace{0.6em} x_4, x_5}$};
	\end{tikzpicture}}
	\caption{$C$-connex decomposition for~\autoref{ex:2}.} \label{fig:decom}
\end{figure}
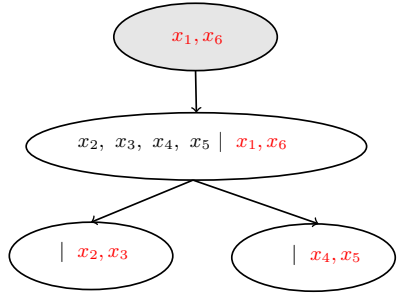
\section{Path Queries} \label{sec:path}

In this section, we present an algorithm for the adorned query $P_k^{\bound \bound}(x_1, x_{k+1}) =  R_1(x_1,x_2) \land \dots \land R_k(x_k, x_{k+1})$ that improves upon the conjectured optimal solution. Before  diving into the details, we first state the upper bound on the tradeoff between space and query time.

\begin{theorem} [due to \cite{goldstein2017conditional}] \label{thm:path:existing}
	There exists a data structure for solving $P_k^{\bound \bound}(x_1, x_{k+1})$ with space $S$ and answering time $T$ such that $S \cdot T^{2/(k-1)} = O(|D|^2)$.
\end{theorem}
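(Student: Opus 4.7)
The plan is to apply~\autoref{thm:main2} to a specific $\nodes_\bound$-connex tree decomposition of the path hypergraph and then balance the widths across bags via a well-chosen parameterization $\delta$, mirroring the $5$-path example in~\autoref{sec:main} but carrying through the general parity analysis.

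I would construct the decomposition by peeling one variable from each end of the path. Let $m = \lfloor k/2 \rfloor$, take the root bag $B_0 = \{x_1, x_{k+1}\}$ with $A = \{B_0\}$, and for $j = 1, \dots, m-1$ let $B_j = \{x_j, x_{j+1}, x_{k+1-j}, x_{k+2-j}\}$ contain the atoms $R_j$ and $R_{k+1-j}$. The last bag $B_m$ absorbs the remaining middle atoms: when $k = 2m+1$, $B_m = \{x_m, x_{m+1}, x_{m+2}, x_{m+3}\}$ with atoms $R_m, R_{m+1}, R_{m+2}$; when $k = 2m$, $B_m = \{x_m, x_{m+1}, x_{m+2}\}$ with atoms $R_m, R_{m+1}$. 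Linking these in a chain rooted at $B_0$ yields a valid $\nodes_\bound$-connex decomposition, since $\{x_1,x_{k+1}\}$ is exactly $\nodes_\bound$ and the running intersection property is easy to check.

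The key calculation is the slack of each bag. For $1 \le j < m$, the optimal fractional cover of $B_j$ puts weight $1$ on $R_j$ and on $R_{k+1-j}$ (so $\rho = 2$), and each free variable $x_{j+1}, x_{k+1-j}$ is covered with weight exactly $1$, giving slack $\alpha_j = 1$ and $\delta$-width $2 - \delta(B_j)$. The same holds for $B_m$ when $k$ is odd, because the middle atom $R_{m+1}$ can be assigned weight $0$ without changing the cover size. When $k$ is even, however, the single free variable $x_{m+1}$ of $B_m$ is covered by both $R_m$ and $R_{m+1}$, so $\alpha_m = 2$ and the width becomes $2 - 2\delta(B_m)$.

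To close the argument I would balance the $\delta$-values. Since there is a single root-to-leaf path, fixing a target $\delta$-height $t$ (so that $T = O(|D|^t)$) and minimizing $\max_j (2 - \alpha_j \delta(B_j))$ subject to $\sum_j \delta(B_j) = t$ gives, by Lagrange balancing, $\delta(B_j) = (2-w)/\alpha_j$ with $w = 2 - t/C$, where $C = \sum_{j=1}^m 1/\alpha_j$. A direct count yields $C = (k-1)/2$ in both parities: for odd $k = 2m+1$ all $m$ bags have slack $1$, so $C = m = (k-1)/2$; for even $k = 2m$ one gets $(m-1)\cdot 1 + 1/2 = (k-1)/2$. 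Plugging this into~\autoref{thm:main2} gives $S = O(|D|^{2 - 2t/(k-1)})$ and $T = O(|D|^t)$ for $t \in [0, (k-1)/2]$ (outside which the additive $|D|$ term of~\autoref{thm:main2} dominates), i.e., exactly $S \cdot T^{2/(k-1)} = O(|D|^2)$. The only mildly subtle step is the parity split at $B_m$; the rest is routine edge-cover arithmetic, so I do not anticipate a real obstacle.
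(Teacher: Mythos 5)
Your proof is correct and follows the same route the paper itself uses to recover this tradeoff: the $C$-connex ``peeling'' decomposition of Figure~1 combined with \autoref{thm:main2}. The paper only works this out explicitly for $k=5$ in the example following \autoref{thm:main2} (assigning $\delta(t_j)=\tau$ uniformly, which works because for odd $k$ every non-root bag has slack~$1$) and then asserts ``the above argument can be generalized,'' otherwise attributing the theorem to Goldstein et al. by citation. You correctly supplied the missing detail that makes the even case go through: for $k=2m$ the middle bag has a single free variable covered by both its atoms, giving slack~$2$, and the uniform-$\tau$ choice would yield a width of $2-\tau$ with height $2\tau$, which gives only $S\cdot T^{2/3}=O(|D|^{2+\tau/3})$ for $k=4$ — worse than claimed. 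Your Lagrange balancing $\delta(B_j)=(2-w)/\alpha_j$ and the identity $\sum_j 1/\alpha_j=(k-1)/2$ in both parities is exactly the fix, and the resulting $S=O(|D|^{2-2t/(k-1)})$, $T=O(|D|^t)$ matches the stated tradeoff for all $k\ge 2$.
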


Note that for $k=2$, the problem is equivalent to \textsf{SetDisjointness} with the space/time tradeoff as $S \cdot T^2 = O(N^2)$.
~\cite{goldstein2017conditional} also conjectured that the tradeoff is essentially optimal.

\begin{conjecture}[due to \cite{goldstein2017conditional}] \label{conjecture:path}
	Any data structure for $P_k^{\bound \bound}(x_1, x_{k+1})$ with answering time $T$ must use space $S = \tilde{\Omega}(|D|^2/T^{2/(k-1)})$.
\end{conjecture}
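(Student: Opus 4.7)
The plan is to establish this conditional lower bound by reducing from the 2-Set Disjointness (2-SD) conjecture, which asserts $S \cdot T^2 = \tilde{\Omega}(N^2)$. The goal is to show that any data structure for $P_k^{\bound\bound}$ with space $S$ and query time $T$ can be converted into a data structure for 2-SD in a way that forces the 2-SD conjecture to imply $S \cdot T^{2/(k-1)} = \tilde{\Omega}(|D|^2)$.

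First, I would fix a blow-up parameter $q$ and design a layered reduction. Starting from a 2-SD instance with $m$ sets $C_1, \dots, C_m$ over a universe $U$ of size $N$, I would partition $U$ into $q$ buckets and build a graph with $k+1$ vertex layers: layers $1$ and $k+1$ hold copies of the sets, while the internal layers $2, \dots, k$ hold elements grouped by bucket. The edges between consecutive internal layers are chosen so that each source--sink path of length $k$ corresponds to a unique (set-pair, bucket-tuple) combination: endpoints $(C_i, C_j)$ are connected through a designated bucket-tuple iff $C_i \cap C_j$ intersects the corresponding bucket region. The intended analysis is that the resulting $P_k$ instance has size $|D| = \Theta(N \cdot q^{\alpha})$ for an exponent $\alpha$ determined by the layer structure, and each 2-SD query reduces to $O(q^{\beta})$ many $P_k$ queries, each on this fixed instance.

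Plugging the resulting 2-SD algorithm parameters into $S \cdot T^2 = \tilde{\Omega}(N^2)$ and balancing $q$ so that the exponent of $|D|$ on the right-hand side matches $2$ while the exponent of $T$ on the left matches $2/(k-1)$, the algebra should, if carried out consistently, yield exactly $S \cdot T^{2/(k-1)} = \tilde{\Omega}(|D|^2)$. The base case $k=2$ is the 2-SD conjecture itself, so an inductive version of this step (reducing $P_{k-1}$ to $P_k$ rather than all the way to 2-SD) is an alternative route worth considering, with the induction carrying the exponent $2/(k-1)$ forward by one layer at a time.

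The hardest part will be constructing the reduction so that a single $P_k$ query encodes exactly one meaningful 2-SD slice while simultaneously blowing up the instance size by the right factor. A naive layered construction with identity edges between element layers collapses to a $P_2$-equivalent problem on $O(N)$ edges, yielding only $S \cdot T^2 = \tilde{\Omega}(|D|^2)$, which does not match the conjecture for $k \geq 3$. The crux is to rule out spurious long paths that would allow a $P_k$ algorithm to amortize many set intersections into a single query and thereby defeat the reduction. I anticipate this to be the fundamental obstacle: \autoref{sec:path} of this paper goes on to exhibit an algorithm for $P_k$ with $k \geq 4$ that actually breaks the tradeoff $S \cdot T^{2/(k-1)} = O(|D|^2)$, which strongly suggests that no bucketing-and-amortization reduction of the above shape can succeed, and that the very feature we are trying to rule out---amortization of multiple 2-SD slices through a single long path---is in fact exploitable algorithmically.
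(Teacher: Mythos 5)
There is a fundamental problem here: the statement you are trying to prove is a \emph{conjecture} quoted from \cite{goldstein2017conditional}, and the paper does not prove it --- it \emph{refutes} it. Lemma~\ref{lem:basic} and Theorem~\ref{thm:path} construct, for every $k \geq 4$ and every $\Delta \geq \sqrt{|D|}$, an explicit data structure with $S = O(|D|\cdot\Delta)$ and $T = O((|D|/\Delta)^{(k-2)/2})$, i.e.\ $S \cdot T^{2/(k-2)} = O(|D|^2)$ in the regime $T \leq |D|^{(k-2)/4}$. Since $2/(k-2) > 2/(k-1)$, this space is strictly below the bound $\tilde{\Omega}(|D|^2/T^{2/(k-1)})$ claimed in \autoref{conjecture:path} for any polynomially large $T$, so the conjecture is simply false for $k \geq 4$ (unconditionally, because the refutation is an algorithm, not a reduction). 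No reduction from the $2$-set-disjointness conjecture of the shape you sketch can therefore succeed, and your own final paragraph essentially concedes this: the ``amortization of multiple 2-SD slices through a single long path'' that you identify as the obstacle is precisely the heavy/light decomposition on the middle variable that \autoref{sec:path} turns into an algorithm.

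Beyond this, the proposal is not yet a proof even on its own terms: the layered construction is never pinned down (the exponents $\alpha$ and $\beta$, the instance size $|D|$, the number of $P_k$ queries per 2-SD query, and the mechanism preventing spurious source--sink paths are all left open), and the balancing step is asserted rather than carried out. Note also that where the paper does prove lower-bound statements for paths, the logical direction is the reverse of yours: \autoref{thm:lb:path} assumes hardness of the \emph{longer} path query $P^{\bound\bound}_3$ and derives a conditional lower bound for $P^{\bound\bound}_2$, and the discussion after it conditions the optimality of $P_3$ on that of the $k=4$ construction of \autoref{thm:path}. If you want a salvageable statement, the right target is not \autoref{conjecture:path} but a conditional result of that form (or the weaker regime not covered by \autoref{thm:path}), since for $k \geq 4$ the bound you set out to prove is known to fail.
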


Building upon~\autoref{conjecture:path},~\cite{goldstein2017conditional} also showed a result on the optimality of approximate distance oracles.  Our result implies that~\autoref{thm:path:existing} can be improved further, thus refuting ~\autoref{conjecture:path}.  The first observation is that the tradeoff in~\autoref{thm:path:existing} is only useful when $T \leq |D|$. Indeed, we can always answer any Boolean path query in linear time using breadth-first search. 
Surprisingly, it is also possible to improve~\autoref{thm:path:existing} for the regime of small answering time as well. In what follows, we will show the improvement for paths of length 4; we will generalize the algorithm for any length later.

\subsection{Length-4 Path}

\begin{lemma} \label{lem:basic}
	There exists a parameterized data structure for solving $P_4^{\bound \bound}(x_1, x_{5})$ that uses space $S$ and answering time $T \leq \sqrt{|D|}$ that satisfies the tradeoff $S \cdot T = O(|D|^2)$.
\end{lemma}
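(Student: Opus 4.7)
My plan is to build a parameterized data structure via a degree-based heavy--light decomposition on both endpoints and intermediate vertices, supplemented by a partial materialization of the 2-paths through ``doubly-light'' middle vertices. Set the threshold $\Delta = T$. I classify the vertices as follows: $a \in \dom(x_1)$ is \emph{heavy} if $d_1(a) > \Delta$ and sink $b \in \dom(x_5)$ is heavy if $d_4^{-1}(b) > \Delta$; intermediate vertices $v \in \dom(x_2)$ and $w \in \dom(x_4)$ are heavy if $d_1^{-1}(v) > \Delta$ and $d_4(w) > \Delta$ respectively; and $c \in \dom(x_3)$ is \emph{doubly-light} if both $d_2^{-1}(c), d_3(c) \le \Delta$. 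By a standard counting argument each heavy category has at most $|D|/\Delta$ members.

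The preprocessing has three components, each using $O(|D|^2/\Delta) = O(|D|^2/T)$ space. First, for each heavy source $a$ store the 4-path answer set $\mathit{Ans}(a) = \{b : P_4(a,b)\}$ in a hash, and symmetrically store $\mathit{Ans}^{-1}(b)$ for heavy $b$. Second, for each heavy intermediate $v$ store the 3-path reach set $Q(v) = \{b : \exists\, x_3, x_4,\ R_2(v,x_3) \land R_3(x_3,x_4) \land R_4(x_4,b)\}$, and symmetrically $P(w)$ for heavy $w$. Third, for every doubly-light $c$, enumerate all pairs $(v,w) \in R_2^{-1}(c) \times N_3(c)$ and insert them into a global hash $M_{DL}$; the total number of inserted pairs is at most $\sum_c d_2^{-1}(c)\cdot d_3(c) \le \Delta \cdot |R_2| \le \Delta |D|$, which fits in $|D|^2/\Delta$ precisely because $\Delta \le \sqrt{|D|}$ (this is exactly where the hypothesis $T \le \sqrt{|D|}$ is used).

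The query procedure on $(a,b)$ first handles the heavy cases in $O(1)$ via $\mathit{Ans}(a)$ or $\mathit{Ans}^{-1}(b)$. For the remaining both-light case, I iterate over $v \in N_1(a)$ (at most $T$ values); for each $v$, if $v$ is heavy as an intermediate, test $b \in Q(v)$ in $O(1)$, otherwise scan $N_4(b)$ looking for a witness $w$ that either (a) is heavy and satisfies $a \in P(w)$, or (b) satisfies $(v,w) \in M_{DL}$. The total space budget is $O(|D|^2/T)$ as required.

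The main obstacle is bounding the query time in the both-light case by $O(T)$ rather than $O(T^2)$: a naive inner enumeration of $N_1(a)\times N_4(b)$ visits $T^2$ pairs. Overcoming this requires a careful amortized argument that, for each non-heavy $v \in N_1(a)$, the search for a matching $w$ can be charged $O(1)$ amortized work on average: roughly, we build a secondary index on $M_{DL}$ grouped by its first coordinate so that the per-$v$ scan terminates as soon as any match is detected, and use symmetric grouping by $w$ to handle the complementary cases. Showing that this bookkeeping indeed balances out to $O(T)$ total query time---without breaking the $O(|D|^2/T)$ space bound---is the delicate part of the argument.
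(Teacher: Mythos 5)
Your proposal has two genuine gaps, and both arise because you over-complicate the heavy/light split; the paper's proof uses a simpler classification that avoids them entirely.

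First, $M_{DL}$ stores only the 2-paths $v \to c \to w$ through a \emph{doubly-light} $c$ (both $d_2^{-1}(c)\le\Delta$ and $d_3(c)\le\Delta$). In the ``both endpoints light'' branch, a valid 4-path may pass through a $c$ that is light on one side but heavy on the other (say $d_2^{-1}(c)>\Delta$, $d_3(c)\le\Delta$) while $v$ and $w$ are both light in your sense. Such a path is in neither $M_{DL}$ nor $Q(v)$ nor $P(w)$, so your answering procedure returns a false negative. Second, you concede the quadratic obstruction yourself: the scan over $N_1(a)\times N_4(b)$ is $T^2$ work, and the ``careful amortized argument'' you gesture at is not supplied and does not obviously exist. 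Early termination only helps when a 4-path is present; in the negative case you must certify that every pair $(v,w)$ fails, and a per-$v$ bucket of $M_{DL}$ can hold far more than $O(1)$ entries, so the scan does not amortize to $O(T)$. Since the $O(T)$ answering time is the entire content of the lemma, this is not a minor bookkeeping detail.

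The paper's proof classifies only $x_3$: a value is \emph{heavy} iff it has degree $>\Delta$ in \emph{both} middle relations, and light otherwise. For heavy $x_3$ it stores the views $V_1(x_1,x_3)$ and $V_2(x_3,x_5)$ (each of size $O(|D|^2/\Delta)$, since there are at most $|D|/\Delta$ heavy values) and answers by intersecting two lists of length $O(|D|/\Delta)=O(T)$. For light $x_3$ it materializes the middle join $V_3(x_2,x_4)$ restricted to light $x_3$, of size $O(|D|\Delta)$ because each light value has degree $\le\Delta$ in at least one middle relation; it then rewrites $P_4$ as the 3-path $R\bowtie V_3\bowtie U$ and applies Theorem~\ref{thm:main1} to this 3-path, which gives time $O(T)$ and space $O(|D|\Delta)$ directly. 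Defining light as the exact complement of heavy guarantees every path is covered, and the reduction to Theorem~\ref{thm:main1} makes the $T^2$ issue disappear. If you keep the single classification on $x_3$ (using the paper's ``heavy in both'' convention) and replace the hand-rolled light case with this 3-path reduction, your argument becomes correct.
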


For $k=4$,~\autoref{thm:path:existing} gives us the tradeoff $S \cdot T^{2/3} = O(|D|^2)$ which is always worse than the tradeoff in~\autoref{lem:basic}. We next present our algorithm in detail.

\smallskip
\introparagraph{Preprocessing Phase} Consider $P_4^{\bound \bound}(x_1, x_{5}) =  R(x_1, x_2) \land S(x_2, x_3) \land T(x_3, x_4) \land U(x_4, x_5)$. Let $\Delta$ be a degree threshold. We say that a constant $a$ is {\em heavy} if its frequency on attribute $x_3$ is greater than $\Delta$ in both relations $S$ and $T$; otherwise, it is {\em light}. In other words, $a$ is heavy if $|\sigma_{x_3=a}(S)| > \Delta$ and $|\sigma_{x_3=a}(T)| > \Delta$. We distinguish two cases based on whether a constant for $x_3$ is heavy or light. Let $\mL_{\textsf{heavy}}(x_3)$ denote the unary relation that contains all heavy values, and $\mL_{\textsf{light}}(x_3)$ the one that contains all light values. Observe that we can compute both of these relations in time $O(|D|)$ by simply iterating over the active domain of variable $x_3$ and checking the degree in relations $S$ and $T$.  We compute  two views:
\begin{align*}
    V_1(x_1,x_3) &= R(x_1, x_2) \land S(x_2, x_3) \land \mL_{\textsf{heavy}}(x_3) \\
    V_2(x_3,x_5) &=  \mL_{\textsf{heavy}}(x_3) \land T(x_3, x_4) \land U(x_4, x_5)
\end{align*}

We store the views as a hash index that, given a value of $x_1$ (or $x_5$), returns all matching values of $x_3$.
Both views take space $O(|D|^2/ \Delta)$. Indeed, $|\mL_{\textsf{heavy}}| \leq  |D|/\Delta$. Since we can construct a fractional edge cover for $V_1$ by assigning a weight of 1 to $R$ and $\mL_{\textsf{heavy}}$, this gives us an upper bound of $|D| \cdot (|D|/\Delta)$ for the query output. The same argument holds for $V_2$. We also compute the following view for light values:  $V_3(x_2, x_4) =   S(x_2, x_3) \land \mL_{\textsf{light}}(x_3) \land T(x_3, x_4).$ This view requires space $O(|D| \cdot \Delta)$, since the degree of the light constants is at most $\Delta$ (i.e. $\sum_{x \in \mL_{\textsf{light}}(x_3)} |S(x_2,x) \land T(x, x_4)| \leq \sum_{x \in \mL_{\textsf{light}}(x_3)} |S(x_2,x)| \cdot |T(x, x_4)| \leq \sum_{x \in \mL_{\textsf{light}}(x_3)} |S(x_2,x)| \cdot \Delta \leq |D| \cdot \Delta$). We can now rewrite the original query  as  $P_4^{\bound \bound}(x_1, x_{5}) =  R(x_1, x_2) \land V_3(x_2,x_4) \land U(x_4, x_5).$

The rewritten query is a three path query. Hence, we can apply~\autoref{thm:main1} to create a data structure with answering time $T = O(|D|/\Delta)$ and space $S = O(|D|^2/(|D|/\Delta) ) = O(|D| \cdot \Delta)$.

\smallskip
\introparagraph{Query Answering} Given an access request, we first check whether there exists a 4-path that goes through some heavy value in $\mL_{\textsf{heavy}}(x_3)$. This can be done in time $O(|D|/\Delta)$ using the views $V_1$ and $V_2$. Indeed, we obtain at most $O(|D|/\Delta)$ values for $x_3$ using the index for $V_1$, and $O(|D|/\Delta)$ values for $x_3$ using the index for $V_3$. We then intersect the results in time $O(|D|/\Delta)$ by iterating over the $O(|D|/\Delta)$ values for $x_3$ and checking if the bound values for $x_1$ and $x_5$ from a tuple in $V_1$ and $V_2$ respectively.
If we find no such 4-path, we check for a 4-path that uses a light value for $x_3$. From the data structure we have constructed in the preprocessing phase, we can do this in time $O(|D|/ \Delta)$.

\smallskip
\introparagraph{Tradeoff Analysis}
From the above, we can compute the answer in time $T = O(|D|/ \Delta)$. From the analysis in the preprocessing phase, the space needed is $S = O(|D|^2/\Delta + |D| \cdot \Delta)$. Thus, whenever $\Delta \geq \sqrt{|D|}$, the space becomes $S = O(|D| \cdot \Delta)$, completing our analysis.

\subsection{General Path Queries}

We can now use the algorithm for the 4-path query to improve the space-time tradeoff for general path queries of length greater than four. 

\begin{theoremrep} \label{thm:path}
Let $D$ be an input instance.
	For $k \ge4$, there is a data structure for $P_k^{\bound \bound}(x_1, x_{k+1})$ with space $S = O(|D|\cdot \Delta)$ and answer time $T= O\left((\frac{|D|}{\Delta})^{\frac{k-2}{2}}\right)$ for $\Delta \geq \sqrt{|D|}$.
\end{theoremrep}
\begin{proof}
	Fix some $\Delta \geq \sqrt{|D|}$.  We construct the data structure for a path of length $k$ recursively. The base case is when $k=4$, with answer time $T_4 = |D|/\Delta$ and space $S_4 = |D| \cdot \Delta$. 
	
	In the recursive step, similar to the previous section, we set $\sqrt{\frac{|D|}{\Delta}}$ as the degree threshold for any constant that variables $x_{1}$ and $x_{k+1}$ can take. Let $\mL_{\textsf{heavy}}^1, \mL_{\textsf{heavy}}^{k+1}$ be unary relations that store the heavy values for $x_1, x_{k+1}$ respectively. We compute and store the result of 
	$$V(x_1, x_{k+1}) = \mL_{\textsf{heavy}}^1(x_1) \land R_1(x_1, x_2) \land \dots \land R_k(x_k, x_{k+1}), \mL_{\textsf{heavy}}^{k+1}(x_{k+1}).$$
	This view has size bounded by $\left(|D|\cdot \sqrt{\frac{\Delta}{|D|}}\right)^2 = |D| \cdot \Delta$. We consider the following queries:
	$$ V_1^{\bound\bound}(x_2, x_{k+1}) =  R_2(x_2, x_3) \land \dots \land R_k(x_k, x_{k+1}). $$
	$$V_2^{\bound\bound}(x_1, x_{k}) =  R_1(x_1, x_2) \land \dots \land R_{k-1}(x_{k-1}, x_{k}).$$
	both of which correspond to the $(k-1)$-path, so we can recursively apply the data structure here. Let $S_k, T_k$ be the space and time for $k$-path. For space, we have following observation:	%
	$$ S_k = |D| \cdot \Delta + S_{k-1}$$ 
	As $S_4 = |D|\cdot \Delta$, we obtain $S_k = O(|D|\cdot \Delta)$. 
	
	Given an access request, we answer it by distinguishing two cases. If $x_1,x_{k+1}$ is heavy, we probe the stored view $V(x_1,x_{k+1})$ in time $O(1)$. If one of them is light (say w.l.o.g. $x_1$), we call recursively the data structure $V_1$ for every one of the $\leq \sqrt{|D|/\Delta}$ values connected with $x_1$. This gives us the following recurrence formula for answer time:
	$$ T_k = (|D|/\Delta)^{1/2} \cdot T_{k-1}$$ 
	Solving the recursive formula gives us $T_k = (|D|/\Delta)^{(k-2)/2}$.
\end{proof}

The space-time tradeoff obtained from ~\autoref{thm:path} is $S \cdot T^{2/(k-2)} = O(|D|^2)$, but only for $T \leq |D|^{(k-2)/4}$. To compare it with the tradeoff of $S \cdot T^{2/(k-1)} = O(|D|^2)$ obtained from~\autoref{thm:path:existing}, it is instructive to look at Figures~\ref{fig:k4} and~\ref{fig:k6}, which plot the space-time tradeoffs for $k=4$ and $k=6$ respectively. In general, as $k$ grows, the new tradeoff line (labeled as $\rho_1$) becomes flatter and approaches \autoref{thm:path:existing}.

\def\ra{1.5}
\begin{figure*}[!htp]
    \begin{subfigure}[b]{0.48\linewidth} 
    \scalebox{0.7}{
    \begin{tikzpicture}[scale=2]
    \draw [<->,thick] (0,2.2) node (yaxis) [above] {}
        |- (1.72*\ra,0) node (xaxis) [below] {};
     \node at (0.3,2.4) {$\log_{|D|}(S)$};   
     \node at (1.85*\ra,-0.2) { $\log_{|D|}(T)$};
    \draw[brown] (0,2) coordinate (A) -- (1.5*\ra,0) coordinate (B);
    \draw[red] (0,2) coordinate (b_1) -- (0.5*\ra,1) coordinate (b_2) node[below, xshift=-1em] {$\rho_1$};
    \draw[dotted,very thick,red] (b_2) -- (3*\ra/4,1);
    \draw[densely dotted,very thick,blue] (1*\ra,2/3) -- (1*\ra,0) node[right, yshift=1.5em] {$\rho_2$};
    \draw[dashed] (0,2/3) node[left] {$4/3$} -- (1*\ra,2/3);
    \node at (1*\ra,1.33) (r4) {{\color{brown} \textit{baseline}}};
    \coordinate (c) at (1.5*\ra,2);
    \node at (-0.1,0) {$1$};
    \node at (0,-0.18) {$0$};
    \node at (1.5,-0.18) {$1$};
    \draw[dashed] (yaxis |- b_2) node[left] {$3/2$}
        -| (xaxis -| b_2) node[below] {$1/2$};
    \draw[dashed] (yaxis |- c) node[left] {$2$}
        -| (xaxis -| c) node[below] {$3/2$};
      \coordinate (u) at (1*\ra,1);
    \end{tikzpicture}}
    \caption{4-reachability CQAP.}
 \label{fig:k4}
\end{subfigure}
\hspace{-3em}
\begin{subfigure}[b]{0.48\linewidth}
\scalebox{0.7}{
    \begin{tikzpicture}[scale=2]
    \node at (-0.1,0) {$1$};
    \node at (0,-0.18) {$0$};
    \draw [<->,thick] (0,2.2) node (yaxis) [above] {}
        |- (2.7*\ra,0) node (xaxis) [below] {};
     \node at (0.3,2.4) {$\log_{|D|}(S)$};   
     \node at (2.85*\ra,-0.2) { $\log_{|D|}(T)$};
    \draw[brown] (0,2) coordinate (A) -- (2.5*\ra,0) coordinate (B);
    \draw[red] (0,2) coordinate (b_1) -- (1*\ra,1) coordinate (b_2) node[below, xshift=-1em] {$\rho_1$};
    \draw[densely dotted,very thick,blue] (1*\ra,1) -- (1*\ra,0) node[right, yshift=3em] {$\rho_2$};
    \node at (1*\ra,-0.18) {$1$};
    \draw[dashed] (0,1) node[left] {$3/2$} -- (1*\ra,1);
    \node at (1.5*\ra,1.33) (r4) {{\color{brown} \textit{baseline}}};
    \coordinate (c) at (2.5*\ra,2);
    \draw[dashed] (yaxis |- c) node[left] {$2$}
        -| (xaxis -| c) node[below] {$5/2$};
      \coordinate (u) at (1*\ra,1);
    \end{tikzpicture}}
\caption{6-reachability CQAP.}
 \label{fig:k6}
\end{subfigure}  
\caption{Space/time tradeoffs for path query of length $k \in \{4,6\}$. The line in brown (baseline) shows the tradeoff obtained from~\autoref{thm:path:existing}. The red curve ($\rho_1$) is the new tradeoff obtained using~\autoref{thm:path} and $\rho_2$ shows the transition to when BFS takes over as the best algorithm.}
\end{figure*}
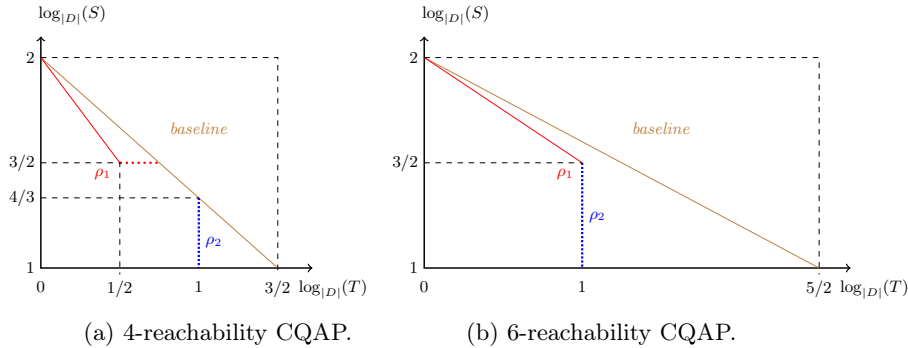

\section{Lower Bounds} \label{sec:lowerbound}

In this section, we study the lower bounds for adorned star and path queries. We first present conditional lower bounds for the $k$-\setdisj problem using the conditional optimality of $\ell$-\setdisj where $\ell < k$. First, we review the known results from~\cite{goldstein2017conditional} starting with the conjecture for $k$-\setdisj.

\begin{conjecture}[due to \cite{goldstein2017conditional}] \label{conjecture:star}
	Any data structure for $k$-\setdisj problem that answers queries in time $T$ must use space $S = \Omega(|D|^k/T^k)$.
\end{conjecture}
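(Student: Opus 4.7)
Since \autoref{conjecture:star} is currently open, my plan is to pursue a conditional proof by reduction, taking the $k=2$ case as an axiom: the classical $2$-\setdisj tradeoff $S \cdot T^2 = \Omega(|D|^2)$ is widely believed and has already been used as a basis for other conditional lower bounds in the literature. A direct cell-probe argument appears to demand genuinely new information-theoretic machinery, so a reduction plan is the first thing I would try before falling back on anything more heavy-duty.

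The concrete reduction I would use is a simple padding trick. Given an $\ell$-\setdisj instance on sets $C_1, \dots, C_m$ over a universe $U$ with total size $N$, I would extend it to a $k$-\setdisj instance (for any $k > \ell$) by introducing $k - \ell$ new sets $D_{\ell+1}, \dots, D_k$, each equal to $U$. An $\ell$-\setdisj request $(i_1, \dots, i_\ell)$ is then answered by the $k$-\setdisj request $(i_1, \dots, i_\ell, D_{\ell+1}, \dots, D_k)$, since intersecting with $U$ is the identity. Because $|U| \le N$, the new instance still has total size $O(N)$ for constant $k$, so any $k$-\setdisj data structure with parameters $(S, T)$ transfers to an $\ell$-\setdisj data structure with the same parameters up to constants. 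Applied with $\ell = 2$ and the $2$-\setdisj conjecture, this immediately yields the conditional bound $S \cdot T^2 = \Omega(|D|^2)$ for $k$-\setdisj.

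The main obstacle is that this plan falls short of \autoref{conjecture:star}: the bound I obtain is $S \cdot T^\ell = \Omega(|D|^\ell)$, whereas the target is $S \cdot T^k = \Omega(|D|^k)$, which is strictly stronger for $T < |D|$. The reason is that padding wastes the additional $k - \ell$ slots on trivial universe sets, so the data structure never has to exploit the full $k$-way intersection structure. To close the exponent gap, I would have to either (i) design a reduction that genuinely uses the extra slots, for instance by replacing each $D_j$ with a carefully chosen family encoding a \emph{second} $2$-\setdisj instance through a product or hashing construction, so that one $k$-\setdisj query resolves several independent $2$-\setdisj queries in a batched way, or (ii) move to a direct cell-probe lower bound against a hard distribution such as random sets of fixed density, and adapt the encoding-size and covering arguments known for $2$-\setdisj to the $k$-ary setting. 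Option (i) is where I would invest the most effort first, because reaching the correct exponent $k$ plausibly requires that each $k$-\setdisj query carry roughly $k$ times the information of a single $2$-\setdisj query, and batched product constructions are the natural source of such information amplification.
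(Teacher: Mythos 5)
The statement you are trying to prove is labeled a \emph{conjecture} in the paper, not a theorem, and the paper gives no proof of it: the surrounding text explicitly notes that Goldstein et al.'s attempted conditional proof via $(k+1)$-Sum Indexing was later refuted~\cite{kopelowitz2019strong}, so the conjecture is still open. There is therefore no ``paper proof'' to compare against, and your opening sentence correctly senses this.

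What the paper \emph{does} prove is the nearby Theorem~\ref{thm:lower:bound}, and it is worth noting that its reduction goes in the opposite direction from yours. The paper assumes the $k$-star lower bound as a hypothesis and derives the $\ell$-star lower bound for $\ell < k$: given a $k$-star instance where all bound values are heavy (degree $\geq T$), they partition the $k$ relations into $\ell$ groups and materialize each group's join into a view $V_i$ of size $O(\prod_{R \in J_i}|R|/T^{k_i - 1})$, reducing $k$-star to $\ell$-star. A too-good $\ell$-star data structure would then contradict the assumed $k$-star bound. This ``downward'' hierarchy is informative but it never establishes hardness of the larger instances from the smaller ones, which is the direction your padding reduction attempts.

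Your padding reduction is correct as far as it goes, and your diagnosis of its inadequacy is exactly right: adding universe sets transfers only the exponent $\ell$, giving $S \cdot T^\ell = \Omega(N^\ell)$ rather than $S \cdot T^k = \Omega(N^k)$. This is in fact the crux of why the conjecture remains open. No one has succeeded in making the ``upward'' reduction from $2$-set disjointness carry enough information to reach exponent $k$; your option~(i), of encoding several independent $2$-set-disjointness instances into one $k$-set-disjointness query via a product construction, runs into the obstruction that the data structure is free to exploit correlations across the padded coordinates, so a single $k$-ary probe need not be forced to ``pay'' $k$ times. This is not a gap you overlooked --- you named it --- but it should be clear that neither of your proposed remedies is close to closing it, and that the paper itself makes no claim to.
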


\autoref{conjecture:star} was shown to be conditionally optimal based on conjectured lower bound for the $(k+1)$-\textsf{Sum Indexing} problem, however, it was subsequently showed to be false~\cite{kopelowitz2019strong}, which implies that~\autoref{conjecture:star} is still an open problem. \autoref{conjecture:star} can be further generalized to the case when input relations are of unequal sizes as follows.

\begin{conjecture} \label{conjecture:star:unequal:relation}
	Any data structure for $ \varphi_*^{\bound \dots \bound}(y_1, \dots, y_k) =  R_1(x,y_1) \land \dots \land R_k(x,y_k)$ that answers queries in time $T$ must use space $S = \Omega(\Pi_{i=1}^k |R_i|/T^k)$.
\end{conjecture}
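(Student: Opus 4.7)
Since Conjecture~\ref{conjecture:star:unequal:relation} generalizes Conjecture~\ref{conjecture:star} in two orthogonal ways---allowing distinct relations $R_1,\dots,R_k$ rather than reusing a single $R$, and allowing unequal sizes rather than $|R_i|=N$---the natural approach is a two-step black-box reduction: any data structure violating Conjecture~\ref{conjecture:star:unequal:relation} should yield one violating Conjecture~\ref{conjecture:star}. Observe that when $R_1=\cdots=R_k=R$ and all sizes are $N$, the two statements already coincide, so the two reduction steps address exactly the two axes of generalization.

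\smallskip
\noindent\textbf{Step 1: Merging the relations.} At equal sizes $|R_1|=\cdots=|R_k|=N$, tag each tuple of $R_i$ with its atom index to form a single relation $R = \bigcup_{i=1}^k \{(x,(i,y)) : (x,y)\in R_i\}$ of total size $kN$. An access request $(a_1,\dots,a_k)$ on the multi-relation query becomes the single-relation $k$-\setdisj request $((1,a_1),\dots,(k,a_k))$ with identical Boolean output. Thus any data structure for the multi-relation equal-size case achieving $S=o(N^k/T^k)$ would immediately yield a single-relation $k$-\setdisj data structure with the same bound, contradicting Conjecture~\ref{conjecture:star}.

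\smallskip
\noindent\textbf{Step 2: Packing unequal sizes into equal ones.} Assume WLOG $n_1 \le n_2 \le \cdots \le n_k$. Partition each $R_i$ into $m_i \approx n_i/n_1$ disjoint blocks of $n_1$ sets each; indexing by $(b_1,\dots,b_k) \in [m_1]\times\cdots\times[m_k]$ yields $M := \prod_i m_i = \prod_i n_i / n_1^k$ disjoint sub-instances, each of which is an equal-size multi-relation $k$-\setdisj problem on sets of size $n_1$. Plant an independent hard distribution (the hard instance inherited from Step~1) into each sub-instance. A direct-sum / encoding argument in the cell-probe model should then show that answering all $M$ sub-problems within time $T$ requires space $\Omega(M \cdot n_1^k / T^k) = \Omega\bigl(\prod_i n_i / T^k\bigr)$, which is exactly the target bound.

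\smallskip
\noindent\textbf{Main obstacle.} The principal difficulty lies in making the Step~2 direct sum rigorous: it is not a priori clear that space scales linearly in $M$, since a single cell of the data structure could in principle be informative across many sub-problems. I would attempt to exploit the fact that the sub-instances live on \emph{disjoint} universes---the block coordinates $(b_1,\dots,b_k)$ are part of the element encoding---so that by an averaging argument each cell probe is useful for at most one sub-problem, and each sub-problem must receive its full $\Omega(n_1^k/T^k)$ share of the budget. Secondary issues such as the non-integrality of $n_i/n_1$ and boundary effects when $n_1$ is very small only perturb the constants and should not affect the asymptotic bound.
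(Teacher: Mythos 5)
The first thing to flag is that the paper offers \emph{no proof} of Conjecture~\ref{conjecture:star:unequal:relation}: it is stated exactly as that, a conjecture, presented as a generalization of Conjecture~\ref{conjecture:star} (which the paper itself notes is ``still an open problem'' after the Kopelowitz--Porat refutation of the $(k+1)$-Sum-Indexing route). There is therefore nothing in the paper to compare against; at most one could ask whether a conditional reduction from Conjecture~\ref{conjecture:star} to Conjecture~\ref{conjecture:star:unequal:relation} is achievable, which is what you attempt. Viewed that way, your attempt has two problems.

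In Step 1 the reduction runs in the wrong direction. Your tagging construction takes a multi-relation instance $(R_1,\dots,R_k)$ and manufactures a single-relation instance $R$ of size $kN$; this shows that a data structure for single-relation $k$-set-disjointness can be used to answer multi-relation queries, i.e.\ ``single-relation DS $\Rightarrow$ multi-relation DS,'' which in contrapositive form says ``Conjecture~\ref{conjecture:star:unequal:relation} (equal sizes) $\Rightarrow$ Conjecture~\ref{conjecture:star}.'' That is the opposite of what you want. The direction you need is trivial and does not require tagging at all: given a single-relation instance $R$, set $R_1 = \dots = R_k = R$ and invoke the hypothetical multi-relation data structure; a space-$o(N^k/T^k)$ multi-relation structure then yields a space-$o(N^k/T^k)$ single-relation structure, contradicting Conjecture~\ref{conjecture:star}. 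So the equal-size case does follow, but not by the mechanism you describe.

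Step 2 is the genuine gap, and you identify it yourself: the claim that space scales linearly in the number $M$ of planted sub-instances is a direct-sum statement in the cell-probe model, and such statements are not consequences of Conjecture~\ref{conjecture:star}. Conjecture~\ref{conjecture:star} only asserts a lower bound for a single worst-case instance; it says nothing about how hardness composes when many instances are packed into one database. The ``disjoint universes $\Rightarrow$ each cell serves one sub-problem'' averaging you sketch is exactly what a cell-probe direct-sum theorem is supposed to justify, and it is false as stated: a data structure is free to store global summaries (hashes, samples, shared dictionaries) whose cells are consulted by queries to many sub-instances, so there is no a priori per-cell charging scheme. Without an actual direct-sum theorem for this problem, Step 2 is not a proof but a restatement of what needs to be shown. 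This matches why the paper simply states Conjecture~\ref{conjecture:star:unequal:relation} separately rather than deriving it from Conjecture~\ref{conjecture:star}: the unequal-size statement appears to be genuinely stronger, and the reduction you propose would require machinery the paper does not claim to have.
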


We now state the main result for star queries.

\begin{theoremrep} \label{thm:lower:bound}
	Suppose that any data structure for $ \varphi_*^{\bound \dots \bound}(y_1, \dots, y_k)$ with answering time $T$ must use space $S = \Omega(\Pi_{i=1}^{k} |R_i|/T^k)$. Then, any data structure for $Q_*^{\bound \dots \bound}(y_1, \dots, y_\ell)$ with answering time $T$ must use space  $S = \Omega(\Pi_{i=1}^{\ell} |R_i|/T^\ell)$, for $2 \leq \ell < k$.
\end{theoremrep}
\begin{proof}
	Let $\Delta = T$ be the degree threshold for the $k$ bound variables $y_1, \dots y_k$. If any of the $k$ variables is light (i.e $|\sigma_{y_i=a[y_i]} R_i(y_i, x)| \leq \Delta$), then we can check whether the intersection between $k$ sets is empty or not in time $O(T)$ by indexing all relations in a linear time preprocessing phase. The remaining case is when all $k$ variables are heavy. We now create $\ell$ views $V_1, \dots V_\ell$ by arbitrarily partitioning the $k$ relations into the $\ell$ views followed by materializing the join of all relations in each view. Let view $V_i$ contain the join of $k_i$ relations. Then, $|V_i| = O((\Pi_{R \in J_i} |R|/T^{{k_i}-1}))$ where $J_i$ is the set of all relations assigned to view $V_i$.
	
	We have now reduced the $k$-star query where all $k$ variables are heavy into an instance of $\ell$-star query where the input relations are $V_1, \dots, V_\ell$. Suppose that there exists a data structure that can answer queries in time $T$ using space $S = o(\Pi_{i=1}^{\ell} |V_i|/T^\ell)$. Then, we can use such a data structure for answering the original query where all variables are heavy. The space used by this oracle is 
	\begin{align*}
		\hspace{8em} S &= o(\Pi_{i=1}^{\ell} |V_i|/T^\ell) = o((\Pi_{i=1}^\ell \Pi_{R \in J_i} |R|/T^{{k_i}-1}) \cdot (1 / T^\ell)) \\
		&= o((\Pi_{i=1}^{k} |R_i| / T^{k - \ell}) \cdot (1 / T^\ell)) = o(\Pi_{i=1}^{k} |R_i|/T^k)
	\end{align*}
	which contradicts the space lower bound for $k$-star.
\end{proof}

\autoref{thm:lower:bound} creates a hierarchy for $k$-\setdisj, where the optimality of smaller set disjointness instances depends on larger set disjointness instances. Next, we show conditional lower bounds on the space requirement of path queries. We begin by proving a simple result for optimality of $P^{\bound \bound}_2$ (equivalent to $2$-Set Disjointness) assuming the optimality of $P^{\bound \bound}_3$ query.

\begin{theoremrep} \label{thm:lb:path}
	Suppose that any data structure for $P^{\bound \bound}_3$ that answers queries in time $T$, uses space $S$ such that $S \cdot T = \Omega(|D|^2)$. Then, for $P^{\bound \bound}_2$ , for any data structure that uses space $S = O(|D|^2/T^2)$, the answering time is $\Omega(T)$.
\end{theoremrep}
\begin{proof}
	Let $\Delta = T$ be the degree threshold for all vertices. If both bound variables in $P^{\bound \bound}_3$ are heavy, then we can answer the query in constant time using space $\Theta(|D|^2/T^2)$ by materializing the answers to all heavy-heavy queries. In the remaining cases, at least one of the bound valuations is light. Without loss of generality, suppose $x_1$ is light. Then, we can make $\Delta$ calls to the oracle for query $P^{\bound \bound}_2(x_2, x_4) = R_2(x_2, x_3), R_3(x_3, x_4)$.
	
	Suppose that there exists a data structure with space $O(|D|^2/T^2)$ for $P^{\bound \bound}_2(x_2, x_4)$ and answering time $o(T)$. Then, we can answer $P^{\bound \bound}_3$ with light $x_1$ in time $o(T^2)$. This improves the tradeoff for $P^{\bound \bound}_3$ since the product of space usage and answering time is $o(|D|^2)$ for any non-constant $T$, coming to a contradiction.
\end{proof}

Using a similar argument, it can be shown that the conditional optimality of~\autoref{thm:path} for $k=4$ implies that $S \cdot T = \Omega(|D|^2)$ tradeoff for $P^{\bound \bound}_3$ is also optimal (but only for the range $T \leq \sqrt{|D|}$ when the result is applicable).
\section{Related Work} \label{sec:related}

The study of fine-grained space/time tradeoffs for query answering is a relatively recent effort in the algorithmic community. The study of distance oracles over graphs was first initiated by~\cite{patrascu2010distance} where lower bounds are shown on the size
of a distance oracle for sparse graphs based on a conjecture about the best possible data structure for a set intersection problem.~\cite{cohen2010hardness} also considered the problem of set intersection and presented a data structure that can answer boolean set intersection queries which is conditionally optimal~\cite{goldstein2017conditional}. There also exist another line of work that looks at the problem of approximate distance oracles. Agarwal et al.~\cite{agarwal2011approximate,agarwal2014space} showed that for stretch-2 and stretch-3 oracles, we can achieve $S \times T = O(|D|^2)$ and $S \times T^2 = O(|D|^2)$. They also showed that for any integer $k$, a stretch-$(1+1/k)$ oracle exhibits $S \times T^{1/k} = O(|D|^2)$ tradeoff. Unfortunately, no lower bounds are known for non-constant query time. The authors in~\cite{goldstein2017conditional} conjectured that the tradeoff $S \times T^{2/(k-1)} = O(|D|^2)$ for $k$-reachability is optimal which would also imply that stretch-$(1+1/k)$ oracle tradeoff is also optimal. A different line of work has considered the problem of enumerating query results~\cite{segoufin2013enumerating} of a non-boolean query.~\cite{cohen2010hardness} presented a data structure to enumerate the intersection of two sets with guarantees on the total answering time. This result was generalized to incorporate {\em full} adorned views over CQs~\cite{deep2018compressed}. Our work extends the results to the setting where the join variables are projected away from the query result (i.e. the adorned views are {\em non-full}) and makes the connection between several different algorithmic problems that have been studied independently. Further, we also consider boolean CQs that may contain negations. In the non-static setting,~\cite{berkholz2017answering} initiated the study of answering conjunctive query results under updates. More recently,~\cite{kara2019counting} presented an algorithm for counting the number of triangles under updates. There have also been some exciting developments in the space of enumerating query results with delay for a proper subset of CQs known as {\em hierarchical queries}.~\cite{kara19} presented a tradeoff between preprocessing time and delay for enumerating the results of any (not necessarily full) hierarchical queries under static and dynamic settings. It remains an interesting problem to find improved algorithms for more restricted set of CQs such as hierarchical queries.

\section{Conclusion} \label{sec:conclusion}

In this paper, we investigated the tradeoffs between answering time and space required by the data structure to answer boolean queries. Our main contribution is a unified algorithm that recovers the best known results for several boolean queries of practical interests. We then apply our main result to improve upon the state-of-the-art algorithms to answer boolean queries over the four path query which is subsequently used to improve the tradeoff for all path queries of length greater than four and show conditional lower bounds. There are several questions that remain open. We describe the problems that are particularly engaging.

\smallskip
\noindent \textbf{Unconditional lower bounds.} It remains an open problem to prove unconditional lower bounds on the space requirement for answering boolean star and path queries in the RAM model. For instance, $2$-\setdisj can be answered in constant time by materializing all answers using $\Theta(|D|^2)$ space but there is no lower bound to rule out if this can be achieved using sub-quadratic space.

\smallskip
\noindent \textbf{Improved approximate distance oracles.} It would be interesting to investigate whether our ideas can be applied to existing algorithms for constructing distance oracles to improve their space requirement.~\cite{goldstein2017conditional} conjectured that the $k$-reachability tradeoff is optimal and used it to prove the conditional optimality of distance oracles. We believe our framework can be used to improve upon the bounds for $k$-reachability in conjunction with other techniques used to prove bounds for join query processing in the database theory community.

\bibliographystyle{splncs04}
\bibliography{references}
\end{document}